\documentclass[twocolumn,pra,showpacs,amsmath,amssymb]{revtex4}
\usepackage{graphicx}
\usepackage{amssymb,amsthm,amsfonts,amstext}
\usepackage{url}
\usepackage[colorlinks, allcolors=blue]{hyperref}
\def\be{\begin{equation}}
\def\ee{\end{equation}}
\def\bea{\begin{eqnarray}}
\def\eea{\end{eqnarray}}
\def\bma{\begin{mathletters}}
\def\ema{\end{mathletters}}

\def\B{{\cal B}}

\def\M{{\cal M}}
\def\A{{\cal A}}
\def\0{\overline{0}}

\def\q0{\underline{0}}

\def\H{{\cal H}}

\def\S{S}
\def\C{{\mathbb C}}
\def\id{{\mathbb I}}

\def\H{{\cal H}}

\def\R{{\cal R}}

\def\A{{\cal A}}

\def\one{\leavevmode\hbox{\small1\normalsize\kern-.33em1}}

\def\bra#1{\langle#1|} \def\ket#1{|#1\rangle}
\def\braket#1#2{\langle#1|#2\rangle}

\def\proj#1{\ket{#1}\!\bra{#1}}
\def\ve#1{\langle#1\rangle}

\newtheorem{theo}{Theorem}

\newtheorem{lemma}[theo]{Lemma}

\newtheorem{remark}{Remark}
\def\id{{\mathbb I}}
\def\tr{\text{tr}}

\begin{document}

\title{Characterizing finite-dimensional quantum behavior}
\author{Miguel Navascu\'es$^{1}$, Adrien Feix$^{2,3}$, Mateus Ara\'ujo$^{2,3}$ and Tam\'as V\'ertesi$^{4}$}
\affiliation{$^1${\it Department of Physics, Bilkent University, Ankara 06800, Turkey}\\
$^2${\it Faculty of Physics, University of Vienna, Boltzmanngasse 5, 1090 Vienna, Austria}\\
$^3${\it Institute for Quantum Optics and Quantum Information (IQOQI), Boltzmangasse 3, 1090 Vienna, Austria}\\ $^4${\it Institute for Nuclear Research, Hungarian Academy of Sciences, H-4001 Debrecen, P.O. Box 51, Hungary}}

\begin{abstract}
We study and extend the semidefinite programming (SDP) hierarchies introduced in [Phys. Rev. Lett. 115, 020501] for the characterization of the statistical correlations arising from finite dimensional quantum systems. First, we introduce the dimension-constrained noncommutative polynomial optimization (NPO) paradigm, where a number of polynomial inequalities are defined and optimization is conducted over all feasible operator representations of bounded dimensionality. Important problems in device independent and semi-device independent quantum information science can be formulated (or almost formulated) in this framework. We present effective SDP hierarchies to attack the general dimension-constrained NPO problem (and related ones) and prove their asymptotic convergence. To illustrate the power of these relaxations, we use them to derive new dimension witnesses for temporal and Bell-type correlation scenarios, and also to bound the probability of success of quantum random access codes.
\end{abstract}

\maketitle

\section{Introduction}

Many problems in quantum information theory can be formulated as optimizations over operator algebras of a given dimensionality. Let us quickly review some of them.

In 1-way quantum communication complexity~\cite{review_comm,review_comm2}, two separate parties, call them Alice and Bob, are respectively handed the bit strings $x,y\in\{0,1\}^n$. Bob's task consists in guessing the value of the Boolean function $f(x,y)\in \{0,1\}$, and, to this aim, we allow Alice to send him a $D$-dimensional quantum system. Under these conditions, computing the maximum probability that Bob's guess is correct amounts to optimizing over all possible $D$-dimensional quantum states prepared by Alice and over all possible measurements conducted by Bob on such states.

In Bell scenarios~\cite{bell}, two or more distant parties conduct measurements over an unknown quantum state. It has been observed that, even if we do not assume any knowledge whatsoever about the mechanisms of the measurement devices, it is sometimes possible to lower bound the dimensionality $D$ of the quantum systems accessible to each party by virtue of the correlations between the measurement results alone~\cite{d_wit1,d_wit2,d_wit3,d_wit4}. In this regard, deriving \emph{dimension witnesses}, i.e., statistical inequalities satisfied by the correlations achievable through multipartite quantum systems of local dimension $D$, can be understood as an optimization over entangled states and measurement operators.

Entanglement distillation~\cite{distill}, or the capacity to prepare states close to a pure singlet given a number of mixed states through Local Operations and Classical Communication (LOCC), is one of the most conventional problems in quantum information science. More generally, determining whether the state transformation $\rho\to \sigma$ can be effected via LOCC can be interpreted as a feasibility problem, where the free variable is the corresponding LOCC map. If we restrict to local protocols or 1-way LOCC, the set of relevant maps admits a simple characterization in terms of tensor products of Krauss operators satisfying certain quadratic constraints. 

The above problems involve optimizations over a tuple of noncommuting variables $X_1,...,X_n$ satisfying a number of polynomial constraints, such as $X_i^2=X_i=X_i^\dagger$ (for projectors), or $X_iX_i^\dagger=X_i^\dagger X_i=\id$ (for unitaries). The number of total constraints is typically so low that, even fixing the dimensionality $D$ of the spaces where these operators act, we find a continuum of inequivalent representations.

Analogous problems emerge in the black-box approach to quantum information theory~\cite{bell,chsh,tsirel_pr,pr}, where the only constraints considered are essentially commutation relations between projection operators implemented by distant parties. The characterization of quantum nonlocality has boosted the field of noncommutative polynomial optimization (NPO) theory~\cite{NPAbound,npa,siam}, where the goal is, precisely, to conduct optimizations over all tuples of operators satisfying a number of polynomial inequalities. NPO theory achieves this via hierarchies of semi-definite programming (SDP) relaxations whose first levels approximate quite well the space of feasible solutions.

Unfortunately, NPO theory does not offer any means to bound or fix the dimension of the Hilbert spaces where such operators act. Since the afore-mentioned problems in quantum information theory become senseless or trivial in the high dimensionality limit, one would not expect NPO to be of any use.

This view changed with the publication of Ref.~\cite{finite_dim_short}, where a systematic way to devise hierarchies of SDP relaxations for a wide class of NPO problems under dimension constraints was introduced. Such relaxations, which seem to work quite well in practice, were used to derive a number of new results in quantum nonlocality and quantum communication complexity. Important theoretical aspects, such as the completeness of the hierarchies, or the explicit nature of the dimension constraints, were nonetheless left out. Actually, from a reading of Ref.~\cite{finite_dim_short}, it is not even clear which problems can be attacked with the new tools.

In this paper, we generalize the SDP schemes proposed in Ref.~\cite{finite_dim_short} to cover \emph{all} NPO problems where the dimensionality of the relevant Hilbert spaces is bounded. We will prove the convergence of the resulting SDP hierarchies and discuss their efficient implementation. Finally, we will use them to derive a number of new results in quantum information theory: from new bounds for quantum random access codes (QRAC) for both real and complex quantum systems to semi-device independent Positive Operator Valued Measure (POVM) detection~\cite{VB10,barra}; from the characterization of temporal correlations~\cite{temp_corr} under dimension constraints to the exploration of tripartite Bell scenarios where the dimensionality of just one of the parties is limited.

The structure of this paper is as follows: first, we will define the generic problem of NPO under polynomial constraints. Then, in Section~\ref{method}, we will present a hierarchy of SDP relaxations to tackle it. In Section~\ref{convergence} we will prove the convergence of this hierarchy. As it turns out, a straightforward implementation of the hierarchy would converge too slowly to be of much use, given reasonable computational resources. Hence in Section~\ref{exploit} we will give some hints to boost the speed of convergence of the method---to make it practical---, and exemplify its application by solving a specific problem on temporal correlations~\cite{temp_corr}. In Section~\ref{others} we will explore the performance of related SDP hierarchies to characterize quantum nonlocality under dimension constraints and quantum communication complexity. In Section~\ref{tips} we will offer some advice on how to code the corresponding programs. Finally, we will summarize our conclusions.

\section{Noncommutative Polynomial Optimization under dimension constraints}

Consider the set S of all $n$-tuples of self-adjoint operators $(X_1,..., X_n)$ satisfying the relations $\R=\{q_i(X)\geq 0: i = 1,...,m\}$. Here $q_i(X)$ denotes a Hermitian polynomial of the variables $X_1, ..., X_n$, while the notation $A\geq 0$ signifies that operator $A$ is positive semidefinite. We will call each feasible tuple $(X_1,..., X_n)$ a \emph{representation} of the polynomial relations $\R$. Given a Hermitian polynomial $p(X)$ and a natural number $D$, the problem we want to address is how to maximize the maximum eigenvalue of $p(X)$ over all representations of $\R$ of dimension $D$ or smaller. In other words, we want to solve the problem

\begin{align}
p^\star=&\max_{\H,X,\psi}\bra{\psi}p(X)\ket{\psi}\nonumber\\
&\mbox{s.t. }\mbox{dim}(\H)\leq D,q_i(X)\geq 0,\mbox{ for } i=1,...,m,
\label{npo_dim}
\end{align}

\noindent where the maximization is supposed to take place over all Hilbert spaces $\H$ with $dim(\H)\leq D$, all tuples of operators $(X_1,...,X_n)\subset B(\H)$ and all normalized states $\ket{\psi}\in \H$. Note that, if it were not for the dimension restriction $D$, the above would be a regular noncommutative polynomial optimization problem~\cite{siam}.

We say that the relations $\R$ satisfy the \emph{Archimedean condition} if there exist polynomials $f_j(X), g_{ij}(X)$ such that

\begin{multline}
C-\sum_{i}X_i^2= \sum_{j}f_j(X)^\dagger f_j(X)  \\ + \sum_{i,j}g_{ij}(X)^\dagger q_i(X)g_{ij}(X).
\label{archimedean}
\end{multline}

In the following, we provide a hierarchy of semidefinite programming relaxations for this problem. Such a hierarchy will provide a decreasing sequence of values
$p_1\geq p_2\geq...$ such that $p^k\geq p^\star,\forall k$. Moreover, if the Archimedean condition holds~\footnote{Actually, as we will see, a weaker condition suffices.}, then the hierarchy can be shown complete, i.e., $\lim_{k\to \infty} p^k = p^\star$.

\section{The method}
\label{method}

Let $y = (y_w)_{|w|\leq 2k}$ be a sequence of complex numbers labeled by monomials $w$ of the variables $X_1,...,X_n$ of degree $|w|$ smaller than or equal to $2k$. Such a sequence will be called a \emph{$2k^{th}$-order moment vector}. Given $y$, the $k^{th}$-order \emph{moment matrix} $M_k(y)$ is an array whose rows and columns are labeled by monomials of $X_1,..., X_n$ of degree at most $k$, and such that

\be
M_k(y)_{u,v} = y_{u^\dagger v}.
\ee

Given $y = (y_w)_{|w|\leq 2k}$ and a Hermitian polynomial $q(X) =\sum_w q_w w(X)$, where the $w$ in the summation ranges over all monomials of $X_1,...,X_n$ of degree at most $\text{deg}(q)$, the corresponding $k^{th}$-order \emph{localizing matrix} is defined as
\be
M_k(qy)_{u,v} =\sum_w q_w y_{u\dagger wv},
\ee

\noindent with $|u|,|v|\leq k-\left\lfloor \frac{\text{deg}(q)}{2}\right\rfloor$.

A sequence $y = (y_w)_{|w|\leq 2k}$ admits a \emph{quantum representation} if there exists a representation $(X_1, ..., X_n)\subset B(\H)$ of relations $\{q_i(X)\geq 0\}_i$, with $\mbox{dim}(\H)\leq D$, and a normalized vector $\ket{\psi}\in \H$ such that $y_w = \bra{\psi}w(X)\ket{\psi}$.
It is a standard result in non-commutative polynomial optimization theory that, if $(yw)$ admits a moment representation (of whatever dimensionality), then $M_k(y)$ and $M_k(q_iy)$ must be positive semidefinite matrices for all orders $k$~\cite{siam}.

The above positive semidefinite constraints are not dimension-dependent, and are actually obeyed by momenta emerging from representations of $\{q_i(X)\geq 0\}_i$ of arbitrary (even infinite) dimensionality. The key to introduce dimension constraints is to acknowledge that moment vectors $(y_w)$ admitting a quantum representation satisfy a number of extra linear restrictions depending on the value of $D$.

Some of such restrictions arise due to matrix polynomial identities (MPI)~\cite{MPI}: these are polynomials $s(X)$ of the variables $X_1,...,X_n$ which are identically zero when evaluated on matrices of dimensionality $D$ or smaller. For $D = 1$, all MPIs reduce to commutators, i.e., $[A_i,A_j] = 0$, if $A_i, A_j \in B(\C)$. Identifying $A_i = X_i$, this implies that sequences $y = (y_w)_{|w|\leq 2k}$ admitting a 1-dimensional moment representation must satisfy $y_{X_1X_2} - y_{X_2X_1}= 0$. Actually, for any value of $D$ there exist MPIs from which non-trivial linear constraints on $y$ can be derived. For $D = 2$, all MPIs are generated by composition of the identities $[[A_1,A_2]^2,A_3]=0$ and $\sum_{\pi\in S_4}\mbox{sgn}(\pi)A_{\pi(1)}A_{\pi(2)}A_{\pi(3)}A_{\pi(4)}=0$, where $S_4$ denotes the set of all permutations of four elements. The latter identity is a particular case of the family of polynomial identities $I_d$, with

\be
\sum_{\pi\in S_d}\mbox{sgn}(\pi)A_{\pi(1)}...A_{\pi(d)}=0.
\label{funda}
\ee

\noindent It can be proven that all $D\times D$ matrices satisfy $I_{2D}$~\cite{MPI}, also called the \emph{standard identity}. The problem of determining the generators of \emph{all} MPIs for dimensions $D$ greater than two is, however, open.

A non-trivial relaxation of problem~(\ref{npo_dim}) is thus:

\begin{align}
p^k=&\max_{y}\sum_wp_w y_w,\nonumber\\
\mbox{s.t. } &y\in \S^k_D, y_{\id}=1,M_k(y)\geq 0,\nonumber\\
&M(q_iy)\geq 0,\mbox{ for } i=1,...,m.
\label{npo_rel}
\end{align}

\noindent where $\S^k_D$ denotes the span of the set of feasible sequences $y = (y_w)_{|w|\leq 2k}$. This is a semidefinite program, and, as such, can be solved efficiently for moment matrices of moderate size (around $200\times 200$) in a normal desktop~\cite{sdp}.

Equivalently, we can re-express the positivity conditions as $\hat{M}_k\equiv M_k(y)\oplus\bigoplus_{i=1}^m M_k(q_iy)\geq 0$ and rewrite the objective function as a linear combination of the entries of the first diagonal block of $\hat{M}$. That way, we can regard the block-diagonal matrix $\hat{M}$ (and not $y$) as our free variable, hence arriving at the program

\begin{align}
p^k=&\max_{\hat{M}}\sum_wp_w \hat{M}_{w,\id},\nonumber\\
\mbox{s.t. } &\hat{M}\in \M^k_D, \hat{M}_{\id,\id}=1,\hat{M}\geq 0,
\label{npo_rel2}
\end{align}

\noindent where $\M^k_d$ denotes the span of the set of feasible extended moment matrices. This reformulation of problem~(\ref{npo_rel}), although conceptually more cumbersome, leads to simpler computer codes.

The key to implement either program is, of course, to identify the subspaces $\S^k_D,\M^k_D$. We will now provide two methods to do so. Both have advantages and disadvantages. In~\cite{mps_amazing} we provide yet a third method, which, although more complicated than the other two, requires considerably less memory and time resources, making it suitable for high order relaxations.

\vspace{10pt}
\noindent\textbf{The randomized method}

We sequentially generate $n$-tuples of random Hermitian $D\times D$ complex matrices $X^j\equiv (X^j_1,...,X^j_n)$ and normalized random vectors $\ket{\psi^j}\in \C^D$, which we use to build moment and localizing matrices $M^j_{u,v}=\bra{\psi^j}u(X^j)^\dagger v(X^j)\ket{\psi^j}$, $M(q_i)^j_{u,v}=\bra{\psi^j}u(X^j)^\dagger q_i(X) v(X^j)\ket{\psi^j}$, respectively. Their direct sum will constitute an extended moment matrix $\hat{M}^j$. Adopting the Hilbert-Schmidt scalar product $\langle A,B\rangle=\tr(A^\dagger B)$, one can apply the Gram-Schmidt process~\footnote{Or, better, a numerically stable variant, such as the modified Gram-Schmidt method~\cite{mod_GS}.} to the resulting sequence of feasible extended moment matrices in order to obtain an orthogonal basis $\tilde{M}^1,\tilde{M}^2,...$ for the space spanned by such matrices. We will notice that, for some number $N$, $\tilde{M}^{N+1}=0$, up to numerical precision. This is the point to terminate the Gram-Schmidt process and define the normalized matrices $\{\Gamma^i\equiv \frac{\tilde{M}^j}{\sqrt{\tr{(\tilde{M}^j)^2}}}:j=1,...,N\}$. It is easy to see that, even though the matrix basis $\{\Gamma^j\}_{j=1}^N$ was obtained randomly, the space it represents is always the same, namely, $\M_{D}^k$.

Indeed, let $N=\mbox{dim}(\M_{D}^k)$, and suppose that $\tilde{M}^1,...,\tilde{M}^{j-1}$ are non-zero, with $j\leq N$. Then the entries of the matrix $\tilde{M}^{j}$ will be polynomials of the components of $X^j,\psi^j$. Since $j\leq N$, there exists a choice of $\vec{z}^j$ such that $\tilde{M}^j(X^j,\psi^j)\not=0$. The probability that a non-zero polynomial vanishes when evaluated randomly is zero, and so we conclude that $\tilde{M}^j$ will be non-zero with probability 1. On the other hand, $\tilde{M}^1=\hat{M}\not=0$, so by induction we have that $N$ randomly chosen moment matrices will span $\M_{D}^k$ with certainty. Consequently, $\tilde{M}^{N+1}=0$ indicates when to stop the procedure.

\begin{remark}
\label{support}
A cautionary note is in order: for high orders $k$, it is expected that program~(\ref{npo_rel2}), as it is written, will not admit strongly feasible points. That is, the subspace $\M^k_D$ will not contain any positive definite matrix. This can be problematic, as many SDP solvers need strong feasibility to operate. The solution is to add up the random extended moment matrices $\hat{M}^j$ as we produce them, i.e., to compute the operator $T=\frac{1}{N}\sum_{j=1}^N\hat{M}^j$. Since $\{\hat{M}^j\}$ were randomly generated, it can be argued that the support of any matrix $\hat{M}\in\M^k_D$ is contained in the support of $T$. Let $V$ be any matrix mapping $\mathrm{supp}(T)$ to $\C^{\mathrm{dim}(\mathrm{supp}(T))}$ isometrically. We just need to replace the positivity condition $\hat{M}\geq 0$ in~(\ref{npo_rel2}) by $V \hat{M} V^\dagger\geq 0$, which, by definition, admits a stricly feasible point.
\end{remark}

This method has the advantage that it is extremely easy to program; more so when the constraints $\{q_i(X)\geq 0\}$ reduce to polynomial identities, as we will see. One disadvantage is that, in practice, the decision to stop the protocol amounts to verifying that the entries $\tilde{M}^{N+1}$ are zero up to $\epsilon$ precision. Choosing the right value for the threshold $\epsilon$ is a delicate matter: if too small, the algorithm will not halt; if too large, the algorithm will stop before it finds a complete basis for $\M_{D}^k$. The second problem is that, due to rounding errors, it is possible that the algorithm will not identify the right subspace, but only an approximation to it.

\vspace{10pt}
\noindent \textbf{The deterministic method}

We choose a simple distribution $f(X,\psi) dX d\psi$, say, a Gaussian, for the entries of each of the matrices $X_1,...,X_n$ and the components of the unnormalized vector $\psi$. Then we define the components of the $2k$ moment vector $y(X,\psi)$ via the relation $y(X,\psi)_w\equiv \bra{\psi}w(X)\ket{\psi}$. Then we compute analytically the matrix

\be
S\equiv\int f(X,\psi)dXd\psi y(X,\psi)y(X,\psi)^\dagger.
\ee

\noindent Clearly, the space $\S^k_D$ corresponds to the support of $S$. Diagonalizing $S$ and keeping just the eigenvectors with non-zero eigenvalue we hence obtain an orthonormal basis for $\S^k_D$.

The disadvantage of this method is that it involves symbolic computations, and hence, depending on the platform used, it is either more difficult to code or results in slower programs.

\section{Convergence of the hierarchy}
\label{convergence}

Let $p^k$ denote the result of the $k^{th}$ order relaxation~(\ref{npo_rel}) of problem~(\ref{npo_dim}), and let $y^k$ be the corresponding minimizer $2k^{th}$-order moment vector.

Now, let us assume that the Archimedean condition~(\ref{archimedean}) is met, and call $r$ the degree of the polynomial on the right-hand side of eq.~(\ref{archimedean}). Expressing the polynomials $f_i,g_{ij}$ as $f_i(X)=\sum_v f^v_i v(X)$, $g_{ij}(X)=\sum_v g^v_{ij} v(X)$, it follows that

\begin{align}
&Cy^k_{u^\dagger u}-\sum_{l=1}^ny^k_{u^\dagger X_l^2u}=\sum_{i}\sum_{v,w}(f_i^v)^*f_i^wy^k_{u^\dagger v^\dagger w u}+ \nonumber\\
&+\sum_{i,j}\sum_{v,w,s}(g_{ij}^v)^*g_{ij}^wq^s_iy^k_{u^\dagger v^\dagger s w u},
\end{align}

\noindent for $|u|\leq k-\lceil\frac{r}{2}\rceil$. Due to positive semidefiniteness of the moment and localizing matrices, the right hand side of the above equation is non-negative, implying that $Cy^k_{u^\dagger u}\geq y^k_{u^\dagger X_l^2u}$, for all $X_l$. By induction, it follows that

\be
C^{|u|}\geq y^k_{u^\dagger u}
\ee

\noindent for all sequences $|u|\leq k-\lceil\frac{r}{2}\rceil$. Such moments corresponds to the diagonal entries of the moment matrix $M_k(y^k)$. Since $M_k(y^k)\geq 0$, it follows that $|y_{|w|}|\leq C^{|w|/2}$ for all monomials $|w|\leq 2k-2\lceil\frac{r}{2}\rceil$.

Now, for each vector $y^k$ replace with zeros all entries $y^k_w$, with $|w|> 2k-2\lceil\frac{r}{2}\rceil$, and complete the resulting $2k$-order moment vector to an $\infty$-order moment vector by placing even more zeros. We arrive at an infinite sequence $\hat{y}^s,\hat{y}^{s+1},...$ of vectors, with $|\hat{y}^k_w|\leq C^{|w|}$ for all $k,w$. By the Banach-Alaoglu theorem~\cite{reed_simon}, this sequence has a converging subsequence~\footnote{Technically, the Banach-Alaoglu theorem must be applied to the sequence $\hat{z}^s,\hat{z}^{s+1},...$, where $z_u=\frac{y_u}{C^{|u|/2}}$.}, and we will call $\hat{y}$ the corresponding limit.

$\hat{y}$ satisfies $\hat{y}_\id=1$, and $M_k(\hat{y}),M_k(q_i\hat{y})\geq 0$ for all $k,i$. By successive Cholesky decompositions of $M_k(\hat{y})$, for $k=s,s+1,...$, we find a sequence of complex vectors $(\ket{u})_u$ with the property $\hat{y}_{u^\dagger v}=\braket{u}{v}$, for all monomials $u,v$. Call $H\equiv \mbox{span}\{\ket{u}:u\}$. We define the action of the operator $\tilde{X}_i$ on this (non-orthogonal) basis by

\be
\tilde{X}_i\ket{u}=\ket{X_iu}.
\ee

\noindent and extend its definition to $\mbox{span}\{\ket{u}:u\}$ by linearity. To prove that this definition is consistent, we need to show that, if $\sum_{u}c_u\ket{u}=\sum_u d_u\ket{u}$ for two different linear combinations $(c_u)_u,(d_u)_u$, then $\sum_{u}c_u\ket{X_iu}=\sum_u d_u\ket{X_iu}$. Indeed, note that, for any vector $\ket{w}$,

\begin{align}
&\bra{w}\sum_{u} c_u\ket{X_iu}=\sum_{u} c_u\braket{w}{X_iu}=\sum_{u} c_u\braket{X_iw}{u}=\nonumber\\
&=\bra{X_iw}\sum_{u} c_u\ket{u}=\bra{X_iw}\sum_{u} d_u\ket{u}=\bra{w}\sum_{u} d_u\ket{X_iu},
\end{align}

\noindent where the $2^{\text{nd}}$ and $5^{\text{th}}$ equalities follow from $\braket{w}{X_iu}=y_{w^\dagger X_i u}=y_{(X_iw)^\dagger u}=\braket{X_iw}{u}$. This relation holds for arbitrary $\ket{w}$, so the vectors $\sum_{u}c_u\ket{X_iu}$, $\sum_u d_u\ket{X_iu}$ must be identical. Similarly, it can be verified immediately that $\tilde{X}_i$ is a symmetric operator, since $\bra{u}\tilde{X}_i\ket{v}=y_{u^\dagger X_i v}=y_{v^\dagger X_i u}^*=\bra{u}\tilde{X}_i\ket{v}$

From the positive semidefiniteness of the localizing matrices $M(q_i\hat{y})$, it can be shown that $\bra{\phi}q_i(\tilde{X})\ket{\phi}\geq 0$ for all $\ket{\phi}\in H$ and $i=1,...,m$. The Archimedean condition implies, moreover, that $\bra{\phi}\tilde{X}^2_i\ket{\phi}\leq C\braket{\phi}{\phi}$. From this observation it is trivial to extend the action of $\tilde{X}_i$ to $\tilde{\H}$, the closure of $H$, and hence we arrive at a Hilbert space $\tilde{\H}$ and a set of operators $\tilde{X}_1,...,\tilde{X}_n$ such that $q_i(\tilde{X})\geq 0$ for $i=1,...,m$ and $\hat{y}_u=\bra{\tilde{\psi}}u(\tilde{X})\ket{\tilde{\psi}}$, for $\ket{\tilde{\psi}}\equiv\ket{\id}$.

Note as well that, by construction, these operators satisfy all MPIs for dimension $D$. Now, call $\A$ the von Neumann algebra generated by $\tilde{X}_1,...,\tilde{X}_n$. By von Neumann's 1949 result~\cite{von_neumann}, such an algebra must decompose as a direct integral of type I, II and III factors~\cite{takesaki}. That is,

\be
\A=\int^{\oplus}d\mu^\text{I} (y) \A^\text{I}_y\oplus \int^{\oplus}d\mu^\text{II} (y) \A^\text{II}_y\oplus \int^{\oplus}d\mu^\text{III}(y) \A^\text{III}_y.
\label{decomp}
\ee

Type I factors are isomorphic to $B(\H)$, for Hilbert spaces $\H$ of finite or infinite dimensionality~\cite{takesaki}. Since $\A$ must satisfy the MPIs for dimension $D$, that excludes Hilbert spaces of dimension $d>D$ from the first term of the right-hand side of~(\ref{decomp}). Moreover, in the Appendix it is proven that Type II and III factors violate the standard identity~(\ref{funda}) for all values of $d$.

It follows that we can write our operators $\tilde{X}_1,...,\tilde{X}_n$ as

\be
\tilde{X}_i=\int^{\oplus}d\mu^\text{I} (y) \tilde{X}_{i,y},
\ee

\noindent where each $\tilde{X}_{i,y}$ acts on a Hilbert space $\H_y$ with $\mbox{dim}(\H_y)\leq D$. From $q_i(\tilde{X})\geq 0$, it follows that $q_i(\tilde{X}_{y})\geq 0$ for $i=1,...,m$. Hence $\hat{p}$ is a convex combination of feasible values of $\langle p(X)\rangle$ and so $\hat{p}\leq p^\star$. On the other hand, $p^k\geq p^\star$ for $k\geq s$. Thus $\hat{p}=\lim_{k\to\infty}p^k\geq p^\star$, proving the convergence of the hierarchy.


\begin{remark}

Note that we just invoked the Archimedean condition~(\ref{archimedean}) to establish the existence of $\hat{y}$ and, later, the boundedness of the operators $\tilde{X}_1,...,\tilde{X}_n$. Both results also follow from the weaker \emph{D-dimensional Archimedean condition}:

\begin{multline}
C-\sum_{i}X_i^2=\sum_{j}f_j(X)^\dagger f_j(X)\\+\sum_{i,j}g_{ij}(X)^\dagger q_i(X)g_{ij}(X)+h_D(X),
\label{archimedean2}
\end{multline}

\noindent where $h_D(X)$ is an MPI for dimension $D$.

\end{remark}

\begin{remark}

If we take $D=1$, then the MPIs will force all operators $X_1,...,X_n$ to commute with each other. In that case, the SDP hierarchy reduces to the Lasserre-Parrilo hierarchy for polynomial minimization~\cite{lasserre,parrilo}.

\end{remark}

\begin{remark}

So far, we have been assuming that the variables $X_1,...,X_n$ are Hermitian. If a subset of them is not, one can still define a converging SDP hierarchy, by considering all possible monomials of the variables $X_1,...,X_n$ and their adjoints $X^\dagger_1,...,X^\dagger_n$ in the definition of moment vectors and moment matrices. In that case, the left-hand side of the $D$-dimensional Archimedean condition~(\ref{archimedean2}) must be replaced by $C-\sum_{i}X_iX^\dagger_i-X^\dagger_iX_i$.

\end{remark}

\section{Exploiting polynomial constraints}
\label{exploit}

It is a basic result in operator algebras that MPIs for $D\times D$ matrices must have degree at least $2D$~\cite{MPI}. This implies that we would need to implement the $D^\text{th}$ relaxation of~(\ref{npo_rel}) or~(\ref{npo_rel2}) in order to obtain non-trivial $D$-dimensional constraints. Even for problems involving a small number of noncommuting variables, this becomes impractical already for $D=5$. Hence, if we wish to conduct optimizations over matrices of dimensions greater than 2 or 3, we must rely on linear restrictions other than those derived from MPIs.

Most NPO problems relevant in quantum information science involve polynomial \emph{identities} rather than polynomial \emph{inequalities}. That is, constraints of the form $q(X)\geq 0$ are complemented with $-q(X)\geq 0$, and so $q(X)=0$ must hold for all representations of $\{q_i(X)\geq 0\}$. The strategy we will follow to solve this kind of problems is to divide the representations of $X_1,...,X_n$ into different classes $r$, in such a way that any two representations belonging to the same class $r$ can be connected by a continuous trajectory of feasible representations in $r$. As we will see, each of these classes will satisfy non-trivial low degree polynomial identities, which we will translate into linear constraints at the level of moment matrices (vectors). By carrying out a relaxation of the form~(\ref{npo_rel}) for each possible class $r$ and taking the greatest result we hence obtain an upper bound on the solution of the general problem~(\ref{npo_dim}).

For instance, suppose that $\{q_i(X)\geq 0\}$ contains relations of the form $X_i^2=1$, for $i=1,2,3$. For $D=2$, there are two possibilities:

\begin{enumerate}
\item
$X_i=\pm \id$ for some $i\in\{1,2,3\}$.
\item
$X_i\not=\pm \id$ for all $i$, in which case it can be shown that the operators satisfy the identities:

\begin{align}
&[X_1,\{X_2,X_3\}_{+}]=[X_2,\{X_1,X_3\}_{+}]=\nonumber\\
&=[X_3,\{X_1,X_2\}_{+}]=0.
\end{align}
\end{enumerate}

\noindent In either case, the noncommuting variables satisfy non-trivial polynomial constraints of degree smaller than 4, the smallest possible degree of an MPI for $D=2$. A way to attack this problem is therefore to define an SDP relaxation of the form~(\ref{npo_rel2}) for each case, enforcing the corresponding extra linear constraints on the moment matrix (or moment vector).

Note also that, if we further assume that the matrices $X_1,...,X_2$ are real, then we can add constraints of the sort

\be
\{X_1,[X_2,X_3]\}_{+}=0
\ee

\noindent in the second case. This approach hence allows us (in principle) to distinguish between real and complex matrix algebras.

The objective, again, is to identify all possible linear restrictions on $\hat{M}_k$ or $y$ within a given class $r$. Fortunately, most NPO problems in quantum information science have the peculiarity that random representations of a given class can be generated efficiently.

Continuing with the previous example, suppose that we wish to optimize over $6$ dichotomic operators, i.e., the polynomial constraints are, precisely, $X^2_i=\id$, for $i=1,2,...,6$. For simplicity, let us denote the first four operators as $X_{00},X_{01},X_{10},X_{11}$ and the last two by $Y_1,Y_2$. We want to maximize the average value of the operator

\be
p(X)\equiv\sum_{j=1,2}\sum_{c_1,c_2=0,1}(-1)^{c_j}X_{c_1c_2}Y_{j}X_{c_1,c_2}.
\label{obj}
\ee

\noindent Note that we can write each dichotomic operator as $X_i=(-1)^a\frac{2E^{i}_a-\id}{2}$, where $\{E^i_a\}$ are projection operators satisfying $E^i_0+E^i_1=\id$. Substituting in~(\ref{obj}), we have that the objective function $\bra{\psi}p(X)\ket{\psi}$ is equal to

\be
4\sum_{c_1,c_2,s=0,1}P(s,c_1|X_{c_1c_2},Y_1)+P(s,c_2|X_{c_1c_2},Y_2)-16,
\ee

\noindent where $P(a_1,a_2|x_1,x_2)=\bra{\psi}E^{x_1}_{a_1}E^{x_2}_{a_2}E^{x_1}_{a_1}\ket{\psi}$. This corresponds to the \emph{temporal correlations scenario} defined in~\cite{temp_corr}, where sequential dichotomic projective measurements are conducted over a quantum system, and a record of the measurements $x_1,x_2,...$ implemented, as well as the measurement outcomes $a_1,a_2,...$, is kept. The goal is to limit the statistics $P(a_1,...,a_n|x_1,...,x_n)$ obtained after several repetitions of the experiment.

As noted in~\cite{siam,temp_corr}, when the dimensionality of the quantum system is unrestricted, the set of all feasible distributions $P(a_1,...,a_n|x_1,...,x_n)$, and hence the optimal value of~(\ref{obj}), can be characterized by a single SDP. Using the SDP solver Mosek~\cite{mosek}, we find that, for $D=\infty$, $p^\star= 8$ up to 7 decimal places.

Suppose, however, that we have the promise that the system has dimension $D=2$. The problem we want to solve is therefore

\begin{align}
p^\star=&\max_{\H,X,\psi}\bra{\psi}p(X)\ket{\psi}\nonumber\\
\mbox{s.t. } &\dim(\H)\leq 2, \id-X_i^2=0,\mbox{ for } i=1,...,6.
\label{npo_dich}
\end{align}

We start by dividing the representations of $2$-dimensional dichotomic operators into classes. For any dichotomic operator, the rank of the projector $E\equiv\frac{X+\id}{2}$ can be $0$, $1$ or $2$. For $r=0,2$, the corresponding operator is $X=-\id$ or $X=\id$, respectively. For $r=1$, a random dichotomic operator $X$ can be generated as $X=2\frac{\proj{v}}{\braket{v}{v}}-\id$, where $v\in\C^2$ is a random complex vector. Since we are dealing with six non-commuting variables, there are $3^6=729$ classes, labeled by the vector $\vec{r}\in\{0,1,2\}^6$, with $\mbox{rank}(X_i+\id)=r_i$.

For a fixed value of $\vec{r}$, we sequentially generate random $6$-tuples of dichotomic operators $X^j\equiv(X^j_1,...,X_6^j)$, with the required rank constraints, as well as a sequence of random normalized vectors $\ket{\psi^j}\in \C^2$. As before, we use each pair $(X^j,\ket{\psi^j})$ to generate a random feasible moment matrix $M^j_k$. Note that, since the conditions $X_i^2=\id$ are implicit in each moment matrix, it is not necessary to include localizing matrices in our description (they would amount to zero diagonal blocks in the extended moment matrix). Notice as well that, given a feasible moment matrix $M_k$, its complex conjugate $M_k^*$ is also feasible. Since $p(X)$ is a real linear combination of Hermitian monomials, the objective function will have the same value for both $M_k$ and $M_k^*$ (and thus for the real feasible moment matrix $\mbox{Re}(M_k)=\frac{M_k}{2}+\frac{M_k^*}{2}$). This implies that, in order to define an SDP relaxation for~(\ref{npo_dich}), it suffices to consider the sequence of real matrices $\mbox{Re}(M_k^1),\mbox{Re}(M_k^2),...$.

Applying the modified Gram-Schmidt method to that sequence until we find linear dependence, we obtain an orthonormal basis for ${\cal M}^k_{D,\vec{r}}$, the space of all real feasible moment matrices for representations of the class $\vec{r}$. This time, the fact that this randomizing method works with probability one is a consequence that the projection of the randomly generated moment matrix $\mbox{Re}(M_k^{j+1})$ onto the orthogonal complement of the space spanned by $\mbox{Re}(M_k^{1}),...,\mbox{Re}(M_k^{j})$ is a matrix whose entries are rational functions of the randomly generated vectors used to build $X^{j+1}$ and $\ket{\psi^{j+1}}$. If $\mbox{Re}(M_k^{1}),...,\mbox{Re}(M_k^{j})$ do not span ${\cal M}^k_{D,\vec{r}}$, then there exists a choice for those vectors such that the projected matrix is non-zero, i.e., at least one of such rational functions is non-zero. It is a well-known fact that the probability that a randomly evaluated non-zero rational function vanishes is zero.

Alternatively, we can identify ${\cal S}^k_{D,\vec{r}}$, the space of feasible $2k$-order moment vectors for representations in the class $\vec{r}$ by parametrizing each normalized vector needed to build $X$ or $\psi$ by two angles $\phi,\varphi$, and constructing the corresponding moment vector $y(\vec{\phi},\vec{\varphi})$. Then the entries of the matrix

\be
S\equiv \int d\vec{\phi} d\vec{\varphi} \mbox{Re}(y(\vec{\phi},\vec{\varphi})) \mbox{Re}(y(\vec{\phi},\vec{\varphi})^\dagger)
\ee

\noindent can be computed analytically. Its support will coincide with ${\cal S}^k_{D,\vec{r}}$.

One way or the other, we must solve the program

\begin{align}
p^k=&\max_{\hat{M}}\sum_wp_w \hat{M}_{w,\id},\nonumber\\
\mbox{s.t. } &\hat{M}\in \M^k_{D\vec{r}}, \hat{M}_{\id,\id}=1,\hat{M}\geq 0.
\label{npo_dich_rel}
\end{align}

\noindent for all possible classes $\vec{r}$. For $k=2$, again using Mosek~\cite{mosek}, we find $p^2\approx 5.656854$, definitely smaller than the free limit.

\section{Similarly inspired SDP hierarchies}
\label{others}

In the following we will introduce two problems in quantum information science which, while not exactly fitting in the class of problems~(\ref{npo_dim}), can be similarly reduced to SDP hierarchies.

\subsection{Quantum nonlocality under dimension constraints}

The scenario is as follows: two distant parties, call them Alice and Bob, conduct measurements on a bipartite quantum system. We denote Alice's (Bob's) measurement setting by $x$ ($y$) and her (his) measurement outcome by $a$ ($b$). We wish to bound a linear functional of the statistics $P(a,b|x,y)$ they will observe, under the assumption that Alice's and Bob's spaces are, at most, $D$-dimensional. If Alice and Bob's outcomes are binary, i.e., $a,b\in\{0,1\}$ the problem can be shown equivalent to

\bea
&&\max \sum_{x,y,a,b}B^{x,y}_{a,b}P(a,b|x,y),\nonumber\\
\text{s.t.} && P(a,b|x,y)=\bra{\psi}E^x_a\otimes F^y_b\ket{\psi},
\label{Bell_dim}
\eea

\noindent where $\{E^x_a,F^y_b\}$ are projection operators acting on $\C^D$, with $\sum_aE^x_a=\sum_bF^y_b=\id_D$, and $\ket{\psi}\in \C^{D}\otimes\C^D$.

Following the last section, we divide the representations of the operators $E^x_a,F^y_b$ into different classes labeled by the vectors $\vec{r},\vec{t}$, with $\mbox{rank}(E^x_a)=r^x_a$, $\mbox{rank}(F^y_b)=t^y_b$. For each representation class $\vec{r},\vec{t}$, we try to characterize the span $\M^k_{D,\vec{r}\vec{t}}$ of feasible $k^{th}$-order moment matrices. To do so, we sequentially generate random normalized states $\ket{\psi^j}\in\C^D\otimes\C^D$ and projectors $E^{x,j}_a,F^{y,j}_b\in B(\C^D)$, satisfying the rank conditions $\mbox{rank}(E^{x,j}_a)=r^x_a$, $\mbox{rank}(F^{y,j}_b)=t^y_b$. Given $E^{x,j}_a,F^{y,j}_b$, we define the projectors $\bar{E}^{x,j}_a\equiv E^{x,j}_a\otimes \id_D$ and $\bar{F}^{y,j}_b\equiv \id_D\otimes F^{y,j}_b$, which we will use to generate a feasible $k^{th}$-order moment matrix $M_k^j$. By subjecting the resulting sequence of moment matrices to the modified Gram-Schmidt orthogonalization, we obtain a basis for $\M^k_{D,\vec{r}\vec{t}}$. The SDP to solve is hence

\bea
B^k\equiv&&\max \sum_{x,y,a,b}B^{x,y}_{a,b}(M_k)_{\bar{E}^{x}_a,\bar{F}^{y}_b}\nonumber\\
\text{s.t.} && (M_k)_{\id,\id}=1, M_k\geq 0,M_k\in{\cal T}^k_{D,\vec{r},\vec{t}}.
\label{Bell_dim_rel}
\eea

\noindent We again advise the reader to check that $T\equiv \frac{1}{N}\sum_{j=1}^NM_k^j$ is positive definite: otherwise, a projection of $M_k$ onto the support of $T$ is necessary to guarantee the strict feasibility of the associated SDP, see Remark~\ref{support}.

The completeness of the above SDP hierarchy can be established easily: following the same lines as in Section~\ref{convergence}, we prove the existence of a (in general, infinite-dimensional) representation $\tilde{E}^x_b,\tilde{F}^x_b\subset B(\H)$, with $[\tilde{E}^x_b,\tilde{F}^x_b]=0$ for all $x,y,a,b$, and a state $\tilde{\psi}$, such that $\sum_{x,y,a,b}B^{x,y}_{a,b}\bra{\tilde{\psi}}\tilde{E}^{x}_a\tilde{F}^{y}_b\ket{\tilde{\psi}}$ coincides with the asymptotic limit $\hat{B}\equiv\lim_{k\to \infty}B^k$. The center of the algebra $\A$ generated by $\{\tilde{E}^x_a:x,a\}$ decomposes $\H$ into a direct integral of sectors $\H_z$. By construction, in each sector $z$, $\A$ boils down to a type I factor $\A_z$ of dimension smaller than or equal to $D$. Being a type-I factor, we can write $\H_z=\H_z^A\otimes\H_z^B$; then $\A_z\sim B(\H_z^A)\otimes \id$, and $\A'_z\sim \id\otimes B(\H_z^B)$, where $\A'_z$ denotes the commutant of $\A_z$. It follows that

\begin{align}
&\tilde{E}_a^x=\int^{\oplus}d\mu(z) \tilde{E}^x_{a,z}\otimes\id_{B,z},\nonumber\\
&\tilde{F}_a^x=\int^{\oplus}d\mu(z) \id_{A,z}\otimes\tilde{F}^y_{b,z},
\end{align}

\noindent where $\mbox{dim}(\H_{A,Y})\leq D$. Likewise, it can also be shown that the algebra generated by $\tilde{F}^y_{b,z}$ decomposes as a direct integral of finite dimensional algebras with dimension smaller than or equal to $D$. $\hat{B}$ is thus a convex combination of feasible points and as such it represents a lower bound for the original problem~(\ref{Bell_dim}).

\vspace{10pt}
\noindent\textbf{Examples}

Here we give examples of maximizing the violation of bipartite Bell inequalities with binary outcomes for different dimensionality of the component spaces. Some of the examples have already appeared in Ref.~\cite{finite_dim_short}. First we discuss the $I_{3322}$ inequality, the only tight three-setting two-outcome Bell inequality and its modified version. Then we move on to one more settings per party. For all the subsequent computations we used the solvers Mosek~\cite{mosek} and SeDuMi~\cite{sedumi} through the interface YALMIP~\cite{yalmip}, which we ran on a memory enhanced desktop PC (with 128 GB RAM).

\noindent\emph{I3322} --- First we considered the $I_{3322}$ inequality~\cite{i3322}, which is the member of the $I_{NN22}$ family $N\ge 2$. Recently, it has been proven that qubit systems are not enough to attain the overall quantum maximum $0.2509$. Rather, the best value in $\C^2\times\C^2$ systems is 0.25~\cite{head_leg,moroder}.
Using SDP, we reproduced the maximum value of 0.25 in dimensions $\C^3\times\C^3$ as well up to 8 significant digits~\cite{finite_dim_short}. The size of the moment matrix was 76 involving 1240 linear constraints. The computations took about 5 minutes for a fixed rank combination of measurements. Note that the hierarchy of Moroder et al.~\cite{moroder}, by limiting the negativity~\cite{negativity} of the bipartite quantum state, also gives a (not necessarily tight) upper bound on the Bell violation for a fixed dimension of the quantum state. Indeed, this method works for $\C^2\times\C^2$ systems by returning a violation of 0.25 of the $I_{3322}$ inequality~\cite{moroder}. However, for $\C^3\times\C^3$ systems it does not seem to converge (see Fig.~1 of~\cite{moroder}).

The SDP method also allows the user to upperbound the maximum quantum violation of a Bell inequality using a fixed two-qudit state. Let us choose the 3-dimensional maximally entangled state, $\ket{\psi} = (\ket{00}+\ket{11}+\ket{22})/\sqrt 3$. We find the value of $0.229771$, which is saturated by see-saw computation, hence the presented upper bound is tight. The computation involved a 116 dimensional moment matrix (on a partial 4-level relaxation) along with 1060 linear constraints. The program took 2 minutes to complete for a fixed rank combination of projective measurements. We mention a related problem, where the maximal violation of $I_{3322}$ has been computed for the maximally entangled state (of unrestricted dimensionality). This has been solved both analytically~\cite{vidick} and using a relaxation method~\cite{zoo} by returning the value of 0.25.

\noindent\emph{Modified I3322} --- Though, $I_{3322}$ inequality likely requires infinite dimensions to achieve the maximal violation $0.2509$~\cite{npa}, $D=12$ seems to be the smallest local dimension surpassing the qubit bound 0.25~\cite{palvert}. It is an open question whether there exists a Bell inequality for which the maximum quantum violation in a given dimension is a strictly monotonic function of the dimension. Below we give such a candidate. To this end, we modify the $I_{3322}$ inequality by introducing a parameter $c\ge1$:
\begin{align}
\label{i3322c}
I_{3322}(c)=& E^A_1 + E^A_2 + E^B_1 + E^B_2\nonumber\\
&-(E_{1,1}+E_{1,2}+E_{2,1}+E_{2,2}) \nonumber\\
&+c(E_{1,3}+E_{3,1}-E_{2,3}-E_{3,2})\le 4c,
\end{align}
where the correlator $E_{x,y}$ between measurement $x$ by Alice and measurement $y$ by Bob is defined as $E_{x,y}=P(a=b|x,y)-P(a\neq b|x,y)$, $a,b\in\{0,1\}$ and $E^A_x$ denotes the marginal of Alice's measurement setting $x$ (and $E^B_y$ is similarly defined for Bob). This inequality is symmetric for exchange of Alice and Bob, and returns the original $I_{3322}$ inequality (written in terms of correlators) for parameter $c=1$.

Setting $c=2$, we used the see-saw variational technique~\cite{seesaw1,seesaw2} to find a lower bound on the maximal violation for any dimension $2\le d\le 15$, which we observe to be gradually increasing with dimension. We conjecture that the bounds are tight. Table~I shows results up to $D=6$ concerning both the lower (see-saw) and upper bounds (SDP). Accordingly, the bounds for $D=2,3$ are indeed tight. Computationally the most challenging case was obtaining the upper bound in $D=4$. It involved 3514 constraints, the dimension of the moment matrix is 184 and took roughly 40 minutes for Mosek to complete the task for a given rank combination of the measurements. The quantum maximum in dimensions 5 and 6 coincide with the NPA bound on level 3 up to the shown digits. We pose it as a challenge to prove tightness of the see-saw bound for $D=4$ (or possibly higher dimensions) by exploiting the symmetric structure of the inequality~(\ref{i3322c}) using techniques such as in Refs.~\cite{moroder,faacets}.

\begin{table}
\label{i3322trunc}
\begin{center}\begin{tabular}{ccc}
\hline
  $D$ & Lower bound & Upper bound \\
  \hline
  2 & 8.013177 & 8.013177\\
  3 & 8.024050 & 8.024050\\
  4 & 8.032766 & 8.071722\\
  5 & 8.039579 & 8.075937\\
  6 & 8.056714 & 8.075937\\
  \hline
\end{tabular}
\caption{Quantum bounds for different local dimensions on the violation of the
$I_{3322}(2)$ inequality computed using see-saw search/SDP computation. Bounds for $D=2,3$ are tight, since the see-saw and SDP bounds match. As an overall upper bound, the NPA hierarchy on level 3 gives 8.075937.}
\end{center}
\end{table}

\noindent\emph{I4422 family} --- A one-parameter family of four-setting inequalities is given in Ref.~\cite{head_leg}. These inequalities are not tight but they have a quite simple structure. They look as follows for $c\ge 0$:
\begin{align}
I_{4422}(c)=& c E^A_1 + (E_{1,1}+E_{1,2}+E_{2,1}-E_{2,2}) \nonumber\\
&+(E_{3,3}+E_{3,4}+E_{4,3}-E_{4,4})\le 4+c,
\end{align}
When $c=0$, it is a direct sum of two CHSH inequalities, hence maximum violation is attained with qubit systems. However, by setting $c>0$, it may serve as a dimension witness. In particular, for $c=1$ (the value used in Eq.~(19) of~\cite{head_leg}), its maximal violation in $\C^2\times\C^2$ systems is upper bounded by the value of 5.8515~\cite{head_leg}. However, using our SDP tool, this upper bound turns out to be not tight: We certify a smaller value of 5.8310, which is matched by the see-saw method. Further, by raising the dimension to $\C^3\times\C^3$, we get the same amount of violation. The SDP computation returning 5.8310 in $\C^3\times\C^3$ was quite demanding: it required a 130 dimensional moment matrix and took about 2 hours of computational time. 5.8310 must be compared to the maximum value of $2\sqrt 2 + \sqrt{10}\approx 5.9907$, achievable in $\C^4\times\C^4$ systems. In contrast to our certified value 5.8310, the corresponding $\C^3\times\C^3$ value arising from Moroder et al. hierarchy~\cite{moroder} (on their level 2) is a higher value of 5.9045.

\noindent\emph{I4722 inequality} --- It is also worth mentioning a situation (actually, this is the only case we are aware of) for which a previous SDP method introduced in Ref.~\cite{head_leg} outperforms our present SDP method. We tested the method in case of asymmetric Bell inequalities, that is, when the number of settings on the two sides are not the same. For the sake of comparison, we have chosen a correlation-type Bell inequality from~\cite{VP09}, already analyzed in~\cite{head_leg}:

\begin{align}
I_{4722}&= E_{11}+E_{21}+E_{31}+E_{41}\nonumber\\
& +(E_{12}-E{22}) + (E_{31}-E{33}) + (E_{41}-E{44})\nonumber\\
& +(E_{25}-E{35}) + (E_{26}-E{46}) + (E_{37}-E{47})\le 8,
\end{align}
which consists of 4 and 7 binary-outcome settings on Alice and Bob's respective sides. In Ref.~\cite{head_leg} a method is presented in Sec. IIIB, which is particularly suited to asymmetric Bell setups. This way, the best upper bound obtained for qubit systems is 10.5102, whereas the best lower bound value of 10.4995 is due to see-saw search. The quantum maximum, attainable with two ququarts, is 10.5830. Using our present SDP technique and a desktop PC, unfortunately, we did not manage to go below the global maximum 10.5830.


\vspace{10pt}

Suppose now that Alice and Bob are conducting non-binary measurements, that is, measurements with more than just two outcomes. Then, in order to consider the most general measurements they could perform, we must model their measurement devices via Positive-Operator Valued Measures (POVMs), rather than projective measurements. There are two ways to accomplish this:

\begin{enumerate}
\item
We can replace constraints of the sort $(E^x_a)^2=E^x_a$ in~(\ref{Bell_dim}) by the positive semidefinite constraints $E^x_a\geq 0$, at the cost of having to add the corresponding localizing matrices to~(\ref{Bell_dim_rel}). Although converging, this method does not seem to behave well in our numerical experiments.

\item
Alternatively, we can exploit the fact that any $d$-outcome POVM $\{E_a\geq 0\}\subset B(\C^D)$ can be realized in an extended Hilbert space $\C^d\otimes\C^D$ via a projective measurement of the form $M_a =U(\proj{a}\otimes \id_D)U^\dagger$, where $U\in B(\C^d\otimes\C^D)$ is a unitary matrix~\cite{nielsen_chuang}. Indeed, taking the state to be $\rho=\proj{0}\otimes \proj{\psi}$ and choosing $U$ appropriately, it can be verified that

\be
\tr(\rho M_a)=\tr(E_a\proj{\psi}),
\ee

\noindent for $a=0,...,d-1$ and all states $\ket{\psi}$.

In the hierarchy to implement, random states of the form $\proj{0}_{A'}\otimes \proj{\psi}_{AB}\otimes\proj{0}_{B'}$ are generated. For each random state, we construct a moment matrix containing the operators $\bar{E}^x_a=U^x(\proj{a}\otimes \id_D)(U^x)^\dagger\otimes\id_D\otimes\id_d$, $\bar{F}^y_b=\id_d\otimes\id_D\otimes V^y(\id_D\otimes \proj{b})(V^y)^\dagger$ and the projectors $P_A=\proj{0}\otimes \id^{\otimes 2}_D\otimes\id_d$, $P_B=\id_d\otimes \id^{\otimes 2}_D\otimes\proj{0}$.

The convergence of this hierarchy follows from the fact that the algebras generated by $\{P_A \bar{E}^x_aP_A\}$, $\{P_B \bar{F}^y_bP_B\}$ cannot violate $D$-dimensional MPIs.

\end{enumerate}

\vspace{10pt}
\noindent\textbf{Examples}

We now apply our method to place nontrivial upper bounds on the quantum violation of Bell inequalities using genuine POVM measurements for some of the settings. Note for binary-outcome settings general POVM measurements are not relevant, hence we have to consider Bell inequalities with at least one non-binary setting. To this end, we consider the simplest tight Bell inequality due to Pironio beyond genuine two-outcome inequalities~\cite{d_wit1,pironio}. In this inequality, Alice has three binary-outcome measurements, and Bob has two settings: the first one has binary outcomes and the second one has ternary outcomes. If we allow Bob to use general POVM measurements on his second setting, the two-qubit quantum maximum $(\sqrt 2 - 1)/2\approx 0.2071$ is recovered up to computer precision on level 3 of the SDP hierarchy. Hence, in this particular Bell inequality the use of general measurements do not provide any advantage over projective ones. Let us note that the quantum maximum without dimension constraints is a larger value 0.2532 which can be obtained using a two-qutrit system and projective measurements~\cite{d_wit1}. We also applied the above method to the CGLMP inequality in order to prove the conjecture that the qubit bound $0.2071$ using projective measurements is optimal (i.e. general POVM measurements do not help to improve the bound). However, in that case, we were unable to go below the known overall quantum maximum given in Refs.~\cite{latorre,npa}.

The previous approach can be easily extended to characterize the statistics of multipartite scenarios where the local dimensionality of all parties is bounded from above. More interestingly, it can also be adapted to deal with multipartite Bell scenarios where only a subset of the parties has limited dimensionality.

Consider, for instance, a tripartite scenario where Alice and Bob's measurement devices are unconstrained, but the dimensionality of the third system (say, Charlie's) is bounded by $D$. We want to generate a basis for the corresponding space of truncated moment matrices, with rows and columns labeled by strings of operators of the form $u(AB)v(C)$, where $u(AB)$ ($v(C)$) denotes a string of Alice and Bob's (Charlie's) operators of length at most $k_{AB}$ ($k_C$).

The key is to realize that, in a multipartite (complex) Hilbert space, the space of feasible moment matrices is spanned by moment matrices corresponding to separable states. Hence, in order to attack this problem, we start by generating a sequence of \emph{complex} $D$-dimensional moment matrices for Charlie's system alone. After applying Gram-Schmidt to these complex matrices, we obtain the basis of Hermitian matrices $\{M_j\}_{j=1}^N$. Next, we generate a basis for Alice and Bob's moment matrices. Since their dimension is unconstrained, such matrices are expressed as

\be
\Gamma_{k_{AB}}=\sum_{|u|\leq 2k_{AB}}c_uN_u+c^*_uN_{u^\dagger},
\label{span_inf}
\ee

\noindent where $N_u$ is a matrix defined by

\bea
(N_u)_{v,w}=&1, \mbox{ if } v^\dagger w=u;\nonumber\\
&0, \mbox{ otherwise}.
\eea

\noindent The overall moment matrix for the whole system can then be expressed as $M=\sum_{u,j}M_j\otimes (c_{u,j}N_u+c_{u,j}^*N_{u^\dagger})$.

Since we are just interested in optimizing a real linear combination of real entries of $M$---corresponding to the measured probabilities $P(a,b,c|x,y,z)$---, we can take the real part of the above matrix, and so we end up with the relaxation:

\begin{gather}
\max \sum_{x,y,z,a,b,c}B^{x,y,z}_{a,b,c}M_{E^x_a,F^y_bG^z_c},\nonumber\\
\mbox{s.t. } M_{\id,\id}=1,M\geq 0,\nonumber\\
M=\sum_{u,j}c^{\R}_{u,j} \mbox{Re}(M_j)\otimes (N_u+N_{u^\dagger})-\nonumber\\
-c^{\mathbb{I}}_{u,j}\mbox{Im}(M_j)\otimes (N_u-N_{u^\dagger}),
\end{gather}

\noindent where $c^{\R}_{u,j}$ ($c^{\mathbb{I}}_{u,j}$) denotes the real (imaginary) part of $c_{u,j}$. This is an SDP with real variables.

\vspace{10pt}
\noindent\textbf{Examples}

We now show applications of the above SDP method tailored to multipartite systems. As a first example, a three-party system is considered for which Alice possesses a qubit and the other two parties (Bob and Charlie) have no restriction on the dimensionality of the Hilbert spaces. We are able to fully reproduce the bounds obtained in Ref.~\cite{head_leg}. In the next example, we extend Alice's Hilbert space to a qutrit, thereby certifying genuine four-dimensional entanglement. Then we move to a four-party (translationally invariant) Bell scenario, and certify that a Bell value above a certain threshold cannot be obtained with symmetric measurements (that is, when each four parties measure the same observables in the first and second respective settings).

\noindent\emph{I333 inequality} --- We consider the following three-party three-setting permutationally invariant Bell inequality~\cite{head_leg}

\begin{align}
\label{I333}
I_{333}=&\text{sym}\{-P(A_1)-2P(A_3)+P(A_1,B_1)\nonumber\\
&-P(A_1,B_2)+P(A_1,B_3)-2P(A_2,B_2)\nonumber\\
&+2P(A_2,B_3)-2P(A_3,B_3)\}\le 0.
\end{align}
Here we used the short-hand notation $P(A_x,B_y)=p(0,0|x,y)$, $P(A_x)=p(0|x)$, and similarly for the other parties. Notice that the Bell expression above consists of only two-body correlators and single-party marginal terms, which usually provide an advantage in experiments. Such Bell inequalities have been proposed in Ref.~\cite{tura_science} to detect nonlocality in multipartite quantum systems for any number of parties (however, those inequalities involve only two settings per party, hence they can be maximally violated with qubit-systems, unlike the present example).
In eq.~(\ref{I333}), $\text{sym}\{X\}$ means that every term occurring in $X$ should be symmetrized with respect to all possible permutations of the parties, e.g. $\text{sym}\{P(A_1,B_1)\}=P(A_1,B_1)+P(B_1,C_1)+P(A_1,C_1)$.

We next compute upper bounds on the quantum violations assuming different dimensionality of the Hilbert spaces. Lower bound values, on the other hand, are obtained from see-saw iteration in a prior work~\cite{head_leg}. Table~II summarizes the results. Values with an asterisk ($^*$) have been established in the present work. Notation $(D_1D_2D_3)$ refers to the dimensionalities of Alice, Bob and Charlie's Hilbert spaces, respectively. Notice that due to symmetry of the Bell inequality~(\ref{I333}), the same bounds apply to any permutations of $(D_1D_2D_3)$. Establishing upper bound on case $(222)$ with non-degenerate measurements was the most time consuming task, the corresponding SDP problem involved 4894 constraints, and took 3 hours to be solved using Mosek---still the lower bound value has not been saturated. Computing the upper bound for the case $(2\infty\infty)$ required to run the hybrid method (Alice was given level 2 of the qubit hierarchy, whereas Bob and Charlie's system was computed on NPA level 1+AB). In that case, we managed to close the gap between the lower and upper bound values, thereby reproducing the result of Ref.~\cite{head_leg}. We also computed $(3\infty\infty)$ upper bound and recovered the global maximum of 0.1962852 certified by the NPA hierarchy (Alice was given the level 3 of the qutrit hierarchy and took 24 hours for MOSEK to solve the resulting SDP).
Accordingly, any Bell violation of $I_{333}$ bigger than $0.1786897$ cannot be attained with dimensionalities $(2\infty\infty)$ (plus the two other permutations) implying that the underlying three-party state $\rho_{ABC}$ has at least Schmidt number vector $(3,3,3)$ (see e.g. Refs.~\cite{marcus1,marcus2}). Moreover any pure state decomposition of $\rho_{ABC}$ contains at least one state $\sigma_{ABC}=\proj{\psi}$ such that the rank of each single party marginal $\sigma_A$, $\sigma_B$, and $\sigma_C$ is greater than 2. In short, a Bell violation of $I_{333}$ bigger than $0.1786897$ detects in a device-independent way that the three-party state is genuinely three-dimensional entangled.

\begin{table}
\label{G3E}
\begin{center}\begin{tabular}{cccccc}
\hline
   & LB & UB & LB & UB & LB\\
   & (222) & (222) & $(2\infty\infty)$ & $(2\infty\infty)$ & (333)\\
  \hline
  No-deg & $0.0443484$ & $0.0541362^*$ & 0.1783946 & $0.1783946^*$ & 0.1962852\\
  Deg & $0.1783946$ & $0.1783946^*$ & 0.1786897 & 0.1786897 & \\
  \hline
\end{tabular}
\caption{Lower bounds (LB) and upper bounds (UB) on the violation of the
$I_{333}$ inequality in various local dimensions. Values with an asterisk ($^*$) have been established in the present work. The notation $(D_1D_2D_3)$ refers to the dimensionalities of Alice, Bob, and Charlie's Hilbert spaces, respectively. The sign $\infty$ denotes no restriction on dimension of the respective party. Abbrevation Deg/No-deg refers to the situation when Alice has at least one degenerate measurement/all measurements are non-degenerate (i.e., rank-1 projectors). The qutrit value $(333)$ is the overall quantum maximum certified by the NPA hierarchy~\cite{NPAbound}. The upper bound value for $(2\infty\infty)$ in the degenerate case was obtained using the NPA hierarchy as well.}
\end{center}
\end{table}

\noindent\emph{I444 inequality} --- We construct a 3-party Bell inequality which cannot be
violated maximally in state spaces $\C^3\times\C^D\times\C^D$ (and arbitrary permutations thereof)
for any dimension $D$. This extends the previous example to the case when Alice's state space is restricted to a qutrit (instead of a qubit). In particular, the maximal violation is attained in $\C^4\times\C^4\times\C^4$.
Hence, this certifies that the underlying three-party quantum state is genuinely four-dimensional entangled.

Let us consider the following 3-party 4-setting Bell inequality~\cite{persistency}
\begin{equation}
\label{Sineq}
I_{444} = \text{CHSH}_{AB}+\text{CHSH}_{A'C}+\text{CHSH}_{B'C'} \leq 6
\end{equation}
where $A$ and $A'$ denote different sets of measurements for party A, and we use similar notation for parties B and C. In Ref.~\cite{persistency} it has been proved that the maximum quantum violation attainable with biseparable states is $S = 4 + 2\sqrt{2}\approx 6.8284$. Hence, if the above bound is exceeded in a Bell experiment, we can conclude that the state is genuinely tripartite entangled~\cite{horo_rmp,guehne_toth}. The same bound can be derived by using the SDP techniques of Ref.~\cite{BGLP} based on the NPA hierarchy. Below we extend this result to the realm of genuine higher-dimensional entanglement.

To this end, we replace $\text{CHSH}_{B'C'}$ with the Tsirelson bound $2\sqrt 2$~\cite{tsirelson}. This places an upper bound on
$I_{444}$ in~(\ref{Sineq}). Therefore, we are left with optimizing $\text{CHSH}_{AB}+\text{CHSH}_{A'C}+2\sqrt 2$ for $\C^3\times\C^D\times\C^D$ systems, where $D$ denotes arbitrary dimension. To do so, we classify Alice's four observables according to their traces ($\pm 1, \pm 3$) and in each case we can solve the problem with SDP for the hybrid multipartite case. Notice, however, that Bob and Charlie have only two binary-outcome measurements, hence Jordan's Lemma applies and we can assume that Bob and Charlie have traceless qubit observables~\cite{extreme_2222}. Then the problem goes back to upper bounding $\text{CHSH}_{AB}+\text{CHSH}_{A'C}+2\sqrt 2$ in $\C^3\times\C^2\times\C^2$, which can be straightforwardly done using our SDP tools. By running the SDP, the maximum turns out to be $36/7+2\sqrt 2$ up to the numerical precision of the solver Mosek. We also solved the problem assuming that Alice has a qubit yielding the upper bound $2 + 4\sqrt{2}$ up to computer precision. Results are summarized in table~III. All bounds are tight as they are saturated using see-saw search.

\begin{table}
\label{G4E}
\begin{center}\begin{tabular}{cccc}
\hline
  Bisep & $(2\infty\infty)$ & $(3\infty\infty)$ & (444)\\
  \hline
  6.828427 & 7.656854 & 7.971284 & 8.485281\\
  \hline
\end{tabular}
\caption{Maximum quantum bounds on Bell inequality $I_{444}$ in~(\ref{Sineq}) for different local dimensions of Alice. The first column (labeled by Bisep) stands for the case when Alice has a classical system and the other two parties have unrestricted dimensionalities. All bounds are tight as they are matched with lower bounds arising from see-saw iteration.}
\end{center}
\end{table}

Consider now the so-called fully connected Bell state, that is, a 3-party state for which any two parties share a 2-qubit Bell pair,
\begin{equation}
\label{Bell3state}
\ket{\psi_{444}}=\ket{\varphi^+}_{AB}\otimes\ket{\varphi^+}_{A'C}\otimes\ket{\varphi^+}_{B'C'}.
\end{equation}
With this particular $\C^4\times\C^4\times\C^4$ state and measurement settings optimal for CHSH violation, we get the overall quantum maximum of $6\sqrt2\approx 8.485281$ for the Bell inequality~(\ref{Sineq}). By
adding a certain amount of white noise to the state~(\ref{Bell3state}):
\begin{equation}
\rho_\text{noisy}=p\ket{\psi_{444}}\bra{\psi_{444}}+\frac{1-p}{4^3}\id_{4\times 4\times 4},
\label{rhonoisy}
\end{equation}
we get the critical visibility $p_\text{crit} =(36/7 + 2\sqrt 2)/(6\sqrt
2)\approx 0.939425$, above which we can detect the state~(\ref{rhonoisy}) to be genuinely four-dimensional entangled. We believe this threshold is low enough to be interesting from an experimental point of view as well.

\noindent\emph{I2222 with symmetric measurements} --- In Ref.~\cite{tura_jpa}, a search has been conducted for all three- and four-partite binary-outcome Bell inequalities involving two-body correlators that obey translationally symmetry. Any translationally invariant Bell inequality is provably maximally violated by a translationally invariant state when all parties measure the same set of observables (of unlimited dimensionality). Numerical investigations in Ref.~\cite{tura_jpa} suggest that it is not true anymore if we restrict the local Hilbert space dimension of the parties. Let us pick $\#64$ inequality from table~2 in Ref.~\cite{tura_jpa}. Due to the fact that these Bell inequalities involve two dichotomic measurements per site, Jordan's lemma applies and the maximum violation is given by $\beta_Q=6+2\sqrt 2$ in qubit systems. Due to numerics, this value is achieved with different pairs of qubit observables.
Indeed, running our SDP program by building up the bases from random symmetric measurements, we certify that applying the same settings at all sites do not allow us to violate the Bell inequality $\#64$ in table~2 of Ref.~\cite{tura_jpa} (i.e., $\beta_Q^{TI}=\beta_c$ in the notation of the corresponding reference).


\subsection{1-way quantum communication complexity}

Consider the following communication scenario: Alice and Bob are given inputs $x,y$ with probability $p(x,y)$ and have the task to compute the Boolean function $f(x,y)$. To do so, we allow Alice to transmit a $D$-dimensional quantum system to Bob, who, upon receiving it, must make a guess $b$ on $f(x,y)$. We wish to find the strategy which will allow Alice and Bob to maximize the probability that Bob's guess is correct, i.e., $b=f(x,y)$. For example, in a Quantum Random Access Code (QRAC)~\cite{QRAC}, the inputs $\vec{x},y$ can take values in $\{0,1\}^k$ and $\{1,...,k\}$, respectively, and the function to compute is $f(\vec{x},y)=x_y$.

This scenario can be modeled by assuming that Alice prepares a pure quantum state $\rho_x\equiv\proj{\psi_x}\in B(\C^D)$ depending on her input $x$. Bob will conduct a two-outcome projective measurement labeled by $y$, and defined by the projection operators $\{F^y_b:b=0,1\}$, whose outcome will be Bob's guess. In sum, we need to solve the problem

\begin{align}
\max &\sum_{x,y}p(x,y) \tr(\rho_xF^y_{f(x,y)}),\nonumber\\
\mbox{s.t. } &\tr(\rho_x)=1,\rho_x^2=\rho_x, (F^y_b)^2=F^y_b,\nonumber\\
&\rho_x, F^y_b\in B(\C^D).
\label{q_comm}
\end{align}

This problem can be reformulated by assuming that the initial state of Alice's system corresponds to $\rho_{x=0}$ and, for any other input $x$, she sends the state $V_x\rho_0 V_x^\dagger$, where $V_x$ is a unitary operator that can be chosen self-adjoint, i.e., $V_x^2=\id$. The resulting problem belongs to the class~(\ref{npo_dim}), and hence there is a converging SDP hierarchy to attack it. We observed that, in practice, such an SDP hierarchy gave good predictions for $D=2$ at $k=2$. For $D=3$, a third-order relaxation did not suffice to reach the optimal probability of success in $2\to 1$ Quantum Random Access Codes (QRAC)~\cite{QRAC}.

We believe that the main reason for such a slow convergence rate is that the above proposal relies solely on MPIs to enhance dimension constraints. In order to devise a practical SDP hierarchy for problem~(\ref{q_comm}), one needs to find a reformulation of problem~(\ref{q_comm}) where the space $\M^k_{D,\vec{r}}$ is dimension-dependent even for low values of $k$. One such reformulation is immediate: regard $\rho_x$ as rank-1 projectors and assume that the state of the system is the (not normalized) tracial state $\id_D$.

The resulting hierarchy of SDPs should be easy to guess: first, we divide the representations of problem~(\ref{q_comm}) into different classes $r$ depending on the rank of the projectors $\{F^y_0\}$. Second, we generate random states $\rho_x$ and projectors $\{F^y_0\}$ within the class $r$, and, taking the state of the system to be $\id_D$, we use them to build random feasible moment matrices. Those allow us to characterize the space $\M_{D,\vec{r}}^k$. Note that dimension constraints on $\M_{D,\vec{r}}^k$ are present for all $D$ even for $k=1$. E.g.: for any feasible first-order moment matrix $M$, $M_{\id,\id}=D\times M_{\id,\rho_x}$.

The above SDP hierarchy gives good results in practice, but we were not able to prove its convergence. Following Section~\ref{convergence}, it can be shown that, for any class $r$, one can define a representation $\tilde{F}^y_b,\tilde{\rho_x}$ and a tracial state $D\proj{\tilde{\psi}}$ which recover the limiting value of the hierarchy of SDPs. Furthermore, the operator algebra decomposes into a direct integral of representations $z$ with the property that $\mbox{rank}(\tilde{F}^{y}_{b,z})\leq r^y_b$, $\mbox{rank}(\id-\rho_{x,z})\leq D-1$, $\mbox{rank}(\rho_{x,z})\leq 1$. If the dimensionality of $\H_z$ is $D$, that defines a feasible point of problem~(\ref{q_comm}). However, if the dimensionality of $\H_z$ is strictly smaller, we run into trouble: in such representations, $\rho_{x,z}$ can vanish for some values of $x$. Constraints such as $\tr(\rho_z)=1$ can be accounted for by other representations $t$ where $\rho_{x,t}$ is a rank-1 projector, since $D\bra{\tilde{\psi}_t}\rho_{x,t}\ket{\tilde{\psi}_t}=\frac{D}{\text{dim}(\H_t)}>1$.

One possibility to suppress the effects of lower finite-dimensional representations is to add `noncommuting constants', i.e., certain extra operators whose operator relations cannot be realized in dimensions lower than $D$. For instance, in order to guarantee that all representations $Y$ have dimension $D=2$, we could include the Pauli matrices $\sigma_z,\sigma_x$ as operators in the moment matrix $M_k$. With these extra variables, proving convergence can be done by appealing to the convergence of the Lasserre-Parrilo hierarchy~\cite{lasserre,parrilo}. However, we did not find a single situation in our numerical experiments where adding noncommuting constants to fix the dimension was of any advantage.

\vspace{10pt}
\noindent\textbf{Examples}

We explore how the relaxation of the above communication problem performs in practice. To do so, we establish (usually tight) upper bounds in QRAC for various values of $k$ and dimension $D$. We also recompute quantum bounds for the witnesses $I_N$ of Gallego et al.~\cite{gallego}. We further distinguish between real and complex Hilbert spaces and detect general POVM measurements assuming that Alice communicates Bob a quantum system of fixed dimension $D=2$. Note that Ref.~\cite{tavakoli} investigates a generalized QRAC problem where Alice's inputs $\vec{x}$ take values from a string of dits (instead of bit-strings). In that case, our SDP method also showed good performance~\cite{tavakoli}.

\noindent\emph{QRAC} --- We suppose the QRAC has independently and uniformly distributed inputs and Alice is allowed to transmit Bob a $D$-level quantum system. We use the notation of Ref.~\cite{marcin_QRAC} and we denote the average success probability of the optimal \emph{$k\to\log_2(D)$} QRAC by $P_{\max}(k\to \log_2(D))$.

It was known previously from Ref.~\cite{QRAC} that $P_{\max}(2\to 1)=1/2+\sqrt{2}/4$. This is actually the value given by our SDP code at order 2, up to numerical precision. Likewise, when Alice is allowed to transmit a qutrit (case $D=3$), our relaxation based on tracial states at the same order 2 gives $P_{\max}\leq 0.90450850$, which matches with high numerical precision the lower bound value obtained via see-saw technique. Another method from the literature to attack this problem is the Mironowicz-Li-Paw{\l}owski (MLP) SDP hierarchy~\cite{MLP}, whose second-order relaxation gives us the (non-tight) upper bound of 0.9268355.

One can prove that the MLP hierarchy does not converge in general. To do that, first notice that any QRAC can be rewritten as a full-correlation Bell inequality, by defining Alice's observables as $A_x = 2\rho_x - \id$. Then, the only constraint that the MLP hierarchy adds to the NPA hierarchy is that $\ve{A_x} = 2-D$. Taking $D=2$, we see that the problem of calculating $P_{\max}(k\to 1)$ reduces to maximizing the violation of a full-correlation Bell inequality constraining (some of) its marginals to be uniform. By Tsirelson's theorem, the maximum is anyway attained when the marginals are uniform, so the constraint is automatically satisfied~\cite{tsirelson,avis09}. This means we can simply solve Tsirelson's SDP to find out this maximum and the minimal dimension necessary to attain it~\cite{avis09}. For $k=4$, we see that $D=4$ is necessary to reach the maximum, so the hierarchy did not converge to the maximum for $D=2$.

By increasing the dimension $D$ and the parameter $k$, the 2nd-order relaxation of the hierarchy based on the tracial states also performs well. The entries in the first three rows of table~IV are from Ref.~\cite{finite_dim_short}, which shows lower and upper bounds on the average success probability for QRAC $k\to \log_2(D)$ for $k=3$ and for different values of $D$. The upper bounds (UB) are computed via our SDP in a normal desktop, and took less than 1 hour for any of the $D$ values (assuming a given rank-combination of measurements) using the solver SeDuMi~\cite{sedumi}. The upper bounds (UB') are resulting from the second-order relaxation of the MLP method~\cite{MLP}. We also show upper bounds (UB'') derived from the Moroder et al.~\cite{moroder} hierarchy by fixing negativity $(D-1)/2$ and adding the constraint $P(a|x)=1/D$ on Alice's marginal distributions. As the table~\ref{tab:QRAC3} shows, except for $D=2,4$, where the outputs of all methods coincide, the new tool gives predictions $\sim 10^{-2}$ more accurate than the MLP method and the method based on Moroder et al. hierarchy.

Let us pick $P_{\max}(3\to 1)=0.788675$ from table~\ref{tab:QRAC3}. This is precisely the value given by the construction of Ike Chuang~\cite{QRAC} proving optimality of the \emph{complex} qubit value. However, we can apply in this case the same ideas to characterize the properties of \emph{real} qubit systems as well. By generating the basis from randomly chosen real-valued qubit states $\rho_x$ and projectors $\{F^y_0\}$, we get the (tight) upper bound 0.7696723. Hence, this simple example allows us to distinguish between real and complex two-level systems.

\begin{table}
\begin{center}\begin{tabular}{ccccccc}
\hline
D& 2& 3& 4& 5& 6& 7\\
\hline
LB& 0.788675& 0.832273& 0.908248& 0.924431& 0.951184& 0.969841\\
UB& 0.788675& 0.832273& 0.908248& 0.924445& 0.954123& 0.969841\\
UB'& 0.788675& 0.853553& 0.908248& 0.934264& 0.957785& 0.979567\\
UB''&0.788675& 0.852156& 0.908248& 0.931201& 0.954140& 0.977072\\
  \hline
\end{tabular}
\caption{Lower (LB) and various upper bounds (UB, UB', UB'') on $P_{\max}(3\to\log_2(D))$ detailed in the text.}
\label{tab:QRAC3}
\end{center}
\end{table}

\noindent\emph{$I_N$ family} --- As another example, we used the SDP program based on tracial states to re-compute the maximal quantum value of the prepare-and-measure dimension witnesses $I_N$ defined in Ref.~\cite{gallego}, table I. The second relaxation of the SDP hierarchy turns out to produce upper bounds for cases $N=3,4$ and $D=2,3$ which match the lower bounds obtained with see-saw method. Let us note that the conclusions of the experimental paper~\cite{gallego_exp} relied on the conjecture that the inequality $I_4\leq 7.9689$ cannot be violated by quantum systems of dimension $D=3$.

In a recent experimental paper~\cite{Ambrosio}, the $I_N$ dimension witness has been investigated for $N=7$. Using a heuristic search, lower bounds are provided for dimensions $D=2,\ldots,6$, which were conjectured to be optimal. First row (LB) in our table~\ref{tab:I7} shows lower bound results of Ref.~\cite{Ambrosio}, which match our lower bounds (LB' in second row) except the case $D=3$, where we got a slightly higher lower bound value. This also justifies the need to relaxation methods providing certified upper bound values. Remarkably, our SDP upper bound values (UB in third row) are close to the LB' values differing in the worst case $D=3$ in the 2nd digit. For instance, Ref.~\cite{Ambrosio} provides the experimental value of $I_7=25.44\pm 0.02$ for the preparation of 6 dimensional states. Therefore our UB value of 24.8991 for $D=5$ certifies the generation of at least 6 dimensional quantum states.

\begin{table}
\begin{center}\begin{tabular}{cccccc}
\hline
D& 2& 3& 4& 5& 6\\
\hline
LB& 17.3976& 20.7143& 23.2167& 24.8978& 26.1017\\
LB'& 17.3976& 20.7085& 23.2167& 24.8987& 26.1017\\
UB& 17.3976& 20.7718& 23.2180& 24.8991& 26.1019\\
\hline
\end{tabular}
\caption{Lower and upper bounds on $I_7$ witness for dimensions $D=2,\ldots,6$. LB stands for the data in Ref.~\cite{Ambrosio}, table I, whereas LB' is due to our see-saw technique and UB is resulting from our SDP method.}
\label{tab:I7}
\end{center}
\end{table}

\noindent\emph{POVM witnesses} --- N. Gisin~\cite{open} asked if there exists a Bell inequality which requires POVMs for optimal violation on some quantum state. This question has been answered affirmatively in case of two-qubit states (see e.g., Refs.~\cite{VB10,barra,acin2015optimal,kleinmann2015correlations}). However, the question is still open for Bell inequalities defining facet of the local polytope (though, numerical study suggests the existence of such cases for three parties and high dimensional states~\cite{grandjean}). We pose a similar question in the prepare-and-measure communication scenario: By fixing dimension (say, Alice is allowed to send Bob a two-level system), are there witnesses which allow higher violations when Bob performs general POVM measurements instead of standard projective measurements? Our SDP tools allow one to certify the existence of such POVM witnesses.

To this end, we pick the $V_4$ witness of Ref.~\cite{ours} and consider dimension $D=2$. This witness consists of four preparations ($x=1,\ldots,4$) on Alice's side and six measurements ($y=1,\ldots,6$) on Bob's side. For $D=2$, the maximum value of $V_4$ equals $2\sqrt 6$, which can be attained if Alice prepares four states pointing toward the vertices of the regular tetrahedron (corresponding to SIC POVM elements). We add a four-outcome measurement to Bob ($y=7$) with four outcomes $b=1,2,3,4$ to the original $V_4$ witness and define the modified $V_4$ witness as follows
\begin{equation}
\label{V4prime}
V_4'=V_4-\sum_{i=1}^4P(b=i|x=i,y=7)\le 2\sqrt 6\simeq 4.8990.
\end{equation}
We remark that a similar modification was used in the context of Bell nonlocality in Ref.~\cite{acin2015optimal}.
As the last term in the inequality cannot be positive, the qubit bound $2\sqrt 6\simeq 4.8990$ using POVMs follows from the bound on $V_4$. Indeed, by using the known optimal qubit settings for $V_4$, the only way of getting the maximal violation of $2\sqrt 6$ is when Bob's POVM elements in setting $y=7$ are anti-aligned with the four tetrahedron states prepared by Alice, so that all probabilities $P(b=i|x=i,y=7)$ becomes zero. By assuming projective qubit measurements for Bob and running SDP in case of $D=2$, we obtain $2(\sqrt 2 +1)\simeq 4.8284$ up to numerical precision on the witness $V_4'$ in Eq.~(\ref{V4prime}). Hence, any value bigger than $4.8284$ for $V_4'$ certifies in a semi-device-independent way that Bob's measurement $y=7$ was in fact a general POVM measurement.

\section{Some tips on implementation}
\label{tips}
In this section, we offer some tips to implement the programs defined above. As we will see, those tricks will make the hierarchy for NPO under dimension constraints much easier to code and modify than its dimension-free counterpart~\cite{siam}. For simplicity, we will assume that our program does not involve localizing matrices, i.e., that all polynomial restrictions appear as identities. We will also presume that operator representations can be divided into classes $\vec{r}$, and that generating a random instance of each class can be done efficiently.

Firstly, we need a subroutine {\tt [X,rho]=genSamp(D,n,r)} that generates a cell-array $X$ of random operators $X_0,X_1,...,X_{n}$, with $X_0=\id_D$ and $X_1,...,X_{n}\in B(\C^D)$ satisfying the appropriate class constraints, determined by the vector $r$. {\tt genSamp} must also return a quantum state $\rho\in B(\C^D)$ (in our examples, either a pure random state $\proj{\psi}$, the maximally entangled state or the unnormalized maximally mixed state).

Secondly, we need a subroutine {\tt G = buildG(X,rho,k)} that, given the cell-array $X$ and an index $k$, generates a $k^{th}$-order moment matrix of the form:

\be
G_{\vec{i},\vec{j}}=\tr(\rho X_{i_k}^\dagger...X_{i_1}^\dagger X_{j_1}...X_{j_k}),
\ee

\noindent where $\vec{i},\vec{j}\in\{0,...,n\}^k$, if the variables $X_1,...,X_n$ are Hermitian, or $\vec{i},\vec{j}\in\{0,...,2n\}^k$ if they are not. In either case, monomials of $G$ can be accessed via $\tr(G\ket{\vec{i}}\bra{\vec{j}})$, choosing $\vec{i},\vec{j}$ appropriately. Note that, in this representation, different columns of $G$ correspond to the same operator, e.g.: $01$ and $10$. That does not matter, because, at the end of the day, redundant columns will be suppressed by the matrix $V$ mapping the space where $G$ is defined to the support of $\mbox{span}\{G\}$, see Remark~\ref{support}.

If the reader is an Octave or MATLAB user, the following considerations will lead to very fast code.

Any $D\times D$ matrix $A=\sum_{i,j=1}^D A_{i,j}\ket{i}\bra{j}$ can be represented in vector form as $\ket{A}=\sum_{i,j} A_{i,j}\ket{i}\bra{j}$: to go from $A$ to $\ket{A}$, one can invoke the in-built function {\tt reshape}. It can be verified that $(\id_D\otimes\bra{\psi^+}\otimes \id_D)\ket{A}\ket{B}=\ket{AB}$, where $\ket{\psi^+}$ is the non-normalized maximally entangled state $\ket{\psi^+}=\sum_{i=1}^D\ket{i,i}$. Similarly, $(\id_D\otimes \bra{\phi})\ket{A}=A\ket{\phi^*}$, and, hence, $\bra{A}(\id_D\otimes\rho^*)\ket{B}=\tr(A^\dagger B\rho)$, for any Hermitian matrix $\rho\in B(C^D)$.

Now, for random operators $X_1,...,X_n$ in the class $r$, define $\Lambda=\sum_{i=0}^n\ket{X_i}\bra{i}$. From the above, it follows that

\begin{multline}
\left(\id_D\otimes\bra{\psi^+}\otimes\id_D\right)\left(\Lambda\otimes \sum_{j_2,...,j_l}\ket{X_{j_2}...X_{j_l}}\bra{j_2,...j_l}\right)\\
= \sum_{j_1,...,j_l}\ket{X_{j_1}...X_{j_l}}\bra{j_1,...i_l}.
\end{multline}

\noindent By tensoring $\Lambda$ sequentially and projecting on the maximally entangled state, we thus (quickly) obtain the operator

\be
C\equiv\sum_{j_1,...,j_k}\ket{X_{j_1}...X_{j_k}}\bra{j_1,...i_k}.
\ee

\noindent Then it can be verified that

\be
G=C^\dagger (\id_D\otimes \rho^*) C.
\ee

Finally, we must code a third subroutine {\tt [basisMat,V]=buildBasis(D,n,r,k)} that, by repeatedly calling {\tt genSamp} and {\tt buildG}, derives an orthonormal matrix basis for $\mbox{span} \{G\}$ and the isometry $V$ described in Remark~\ref{support}. From there, it is straightforward to implement program~(\ref{npo_rel2}).

Notice that all operator relations are determined by {\tt genSamp} only. This means, for example, that if we wish to optimize over projection (unitary operators) all we need to do is program {\tt genSamp} to generate random tuples of projection (unitary) operators $P_1,...,P_n$ ($U_1,...,U_n,U^\dagger_1,...,U^\dagger_n$). Switching from one type of polynomial constraints to another can thus be done straightforwardly with the above implementation.

\section{Conclusion}

In this paper we have extended the notion of noncommutative polynomial optimization (NPO) to scenarios where the dimensionality of the spaces where the noncommuting variables act is bounded from above. We have presented a complete hierarchy of SDP relaxations to solve such problems and we have explored its performance by applying it to solve a number of open problems in quantum information theory.

Our research raises several questions which deserve further study. The first one is whether the hierarchy of relaxations proposed to study 1-way quantum complexity is complete or not. As we showed, the main obstacle to prove convergence lies in interference from low-dimensional operator representations with `dark states'. For $D=2$, the extreme points of such additional one-dimensional representations are finite, and thus the problem of determining whether the hierarchy converges in a given prepare-and-measure scenario amounts to proving that all such points can be reproduced by two-level quantum systems. Similarly, the convergence of the sequence of relaxations to the maximum value of a specific functional for arbitrary $D$ can be verified by establishing an upper bound for the dark-state value smaller than or equal to the value of a concrete $D$-dimensional realization. It would be more satisfactory, though, to have a general convergence result.

Another open problem is whether the relaxations proposed for optimizations over finite-dimensional \emph{real} operator algebras actually converge. Here we are faced with the problem that certain operator representations, irreducible in the real space, can be expressed as a direct sum of non-trivial representations if we allow complex unitary transformations. Hence, if we wished to follow the proof for complex algebras, we would encounter problems at the step of applying von Neumann's direct-integral theorem~\cite{von_neumann}.

Finally, note that in this work we have not exploited the symmetry of the functionals to optimize. If our aim is to bound quantum nonlocality under dimension constraints, that leaves us with a method that, in the best scenario, would allow us to conduct optimizations for $D=2,3$ with a normal computer. Which dimensionalities could become accessible if we chose to play with symmetries is an intriguing question.

\section*{Acknowledgements}
M.N. acknowledges support by Spanish MINECO Project No. FIS2013-40627-P.
T.V. acknowledges support from the Hungarian National Research Fund OTKA (K111734) and a J\'anos Bolyai Grant of
the Hungarian Academy of Sciences. M.A. and A.F. acknowledge support from the European Commission project RAQUEL (No.~323970) and the Austrian Science Fund (FWF) through the Special Research Programme FoQuS, Individual Project  (No.~2462) and the Doctoral Programme CoQuS.

\begin{appendix}
\section{Type II and III factors violate the standard MPI}

The purpose of this Appendix is to prove its title. To do so, we will require the following lemma:

\begin{lemma}
Let $\A$ be a type II or III factor. Then, for all $n$, there exists non-trivial projectors $\{P_i\}_{i=1}^n\subset \A$ such that
\begin{enumerate}
\item
$\sum_{i=1}^n P_i=\id$.
\item
$P_1\A P_2\A...P_n\A\not= 0$.

\end{enumerate}
\end{lemma}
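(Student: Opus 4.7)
The plan is to construct, in any type II or III factor $\A$, a system of $n$ pairwise orthogonal, nonzero, Murray--von Neumann equivalent projections $P_1,\ldots,P_n$ summing to $\id$, and then to use the partial isometries implementing those equivalences to exhibit an explicit nonzero element of $P_1\A P_2\A\cdots P_n\A$.

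For the production of the $P_i$ I would split into cases according to type. In a type II$_1$ factor, the faithful normal tracial state $\tau$ takes every value in $[0,1]$ on projections, and two projections are Murray--von Neumann equivalent iff they have the same trace; choosing $n$ orthogonal projections each of trace $1/n$ therefore does the job. In a type II$_\infty$ or a type III factor the identity is properly infinite, and I would invoke the halving lemma together with the comparison theory for properly infinite projections (standard results, see e.g. Takesaki) to decompose $\id$ into $n$ nonzero pairwise orthogonal, pairwise equivalent projections.

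Once the $P_i$ are in hand, Murray--von Neumann equivalence supplies partial isometries $V_i\in\A$, $i=1,\ldots,n-1$, with $V_i^*V_i=P_{i+1}$ and $V_iV_i^*=P_i$; in particular $V_i=P_iV_iP_{i+1}\in P_i\A P_{i+1}$. Setting $a_i=V_i$ for $i<n$ and $a_n=\id$, the element
\be
x\equiv P_1a_1P_2a_2P_3\cdots P_na_n\in P_1\A P_2\A\cdots P_n\A
\ee
telescopes, via the identities $P_iV_i=V_i=V_iP_{i+1}$, to $x=V_1V_2\cdots V_{n-1}$. A short induction using $V_i^*V_i=P_{i+1}$ and $V_iV_i^*=P_i$ gives $x^*x=P_n$ and $xx^*=P_1$, so $x$ is a partial isometry from $P_n\H$ onto $P_1\H$, and since $P_n\neq 0$, we have $x\neq 0$, establishing the lemma.

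The main obstacle is the first step in the properly infinite case. For type II$_1$ factors the trace provides a direct, elementary construction, but for type II$_\infty$ and type III factors one cannot normalize projections by a finite trace, and one must instead appeal to the structural decomposition theory of factors (halving lemma plus comparison of projections) to guarantee that properly infinite projections split into arbitrarily many nonzero equivalent pieces. The telescoping calculation in the second step is then immediate.
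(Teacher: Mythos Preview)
Your argument is correct, but it follows a genuinely different route from the paper's. The paper argues by contradiction: assuming the statement fails beyond some maximal $N$, it takes a decomposition $\{P_i\}_{i=1}^N$ satisfying the conditions, splits $P_N$ into two nonzero subprojections (using only that type II/III factors have no minimal projections), and shows that the two-sided ideal generated by $\A P_1\A\cdots\A P_N'\A$ has a unit $\Pi$ that commutes with $\A$ yet annihilates $P_{N+1}'$, contradicting the triviality of the center. Nowhere does the paper invoke Murray--von Neumann equivalence, the trace, or the halving lemma; the only structural inputs are ``trivial center'' and ``no minimal projections.'' Your proof, by contrast, is constructive: you build equivalent projections summing to $\id$ via comparison theory, then exhibit an explicit nonzero element $V_1\cdots V_{n-1}\in P_1\A P_2\A\cdots P_n\A$ as a product of the implementing partial isometries. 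The trade-off is that your approach is more transparent and gives a concrete witness, at the cost of importing the halving/comparison machinery as a black box in the properly infinite case, whereas the paper's argument is more self-contained but less explicit about what the nonzero element actually is.
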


\begin{proof}
Suppose that the statement is not true for general $n$ and let $N$ be the greatest number such that it holds. Note that $N>1$. Indeed, if $P_1\A P_2=0$, for non-trivial $P_1,P_2$ then, for all $x\in\A$,

\begin{align}
&[x,P_1]=(P_1+P_2)[x,P_1](P_1+P_2)=\nonumber\\
&=P_1xP_1+P_2xP_1-P_1xP_1-P_1xP_2=0.
\end{align}

\noindent That is impossible, because factors, by definition are central, i.e., the only elements of $\A$ commuting will $\A$ are multiples of the identity.

Now, suppose that $\{P_i\}_{i=1}^N$ satisfy the conditions of the lemma. Then we can always write $P_N=P_N'+P_{N+1}'$, where $P_N',P_{N+1}'$ are non-zero projectors such that $P_1'\A P_2'\A...P_N'\A\not=0$. Call $\B$ the algebra generated by $\A P_1'\A P_2'\A...P_N'\A$, and let $\Pi$ denote its identity, i.e., a projector $\Pi\in \B$ such that $\Pi y=y$ for all $y\in \B$. Note that, due to the definition of $\B$, $\A\B=\B\A=\B$, and hence $x\Pi,\Pi x\in\B$ for all $x\in \A$. It follows that $\Pi x=\Pi x\Pi=x\Pi$, that is $[x,\Pi]=0$.

On the other hand, $N$ is the greatest number such that the conditions of the lemma hold, and so $\A P_1'...\A P_N'\A P_{N+1}'=P_{N+1}'\A P_1'...\A P_N'\A=0$. In other words: $\B P_{N+1}'=0$, and, consequently, $\Pi P_{N+1}'=0$. We conclude that $\Pi$, which is neither $0$ nor the identity, must commute with all $x\in\A$, contradicting the centrality of $\A$.

\end{proof}

Now, given $n$ and a type II or III factor $\A$, choose orthogonal projectors $\{P_i\}_{i=1}^n\subset \A$ and operators $x_1,...,x_n\in\A$ such that $P_1x_1P_2x_2...P_n\not=0$. Then define the operators $E_i\equiv P_ix_iP_{i+1}$, for $i=1,...,n-1$ and compute the fundamental polynomial $I_{n-1}(E)$. Due to the orthogonality of $\{P_i\}$, the only non-vanishing product is $E_1E_2...E_{n-1}=P_1x_1P_2x_2...P_n\not=0$. $\A$ hence violates $I_{n-1}=0$.

\end{appendix}

\bibliographystyle{unsrt}
\bibliography{fin_dim2}

\begin{thebibliography}{10}

\bibitem{review_comm}
A.~Yao.
\newblock Quantum circuit complexity.
\newblock {\em Proceedings of the 34th IEEE FOCS, IEEE, Los Alamitos, CA}, page
  352–360, 1993.

\bibitem{review_comm2}
H.~Buhrman, R.~Cleve, S.~Massar, and R.~de~Wolf.
\newblock Nonlocality and communication complexity.
\newblock {\em Rev. Mod. Phys.}, 82:665, 2010.

\bibitem{bell}
John~S. Bell.
\newblock On the einstein podolsky rosen paradox.
\newblock {\em Physics}, 1(3):195--200, 1964.

\bibitem{d_wit1}
N.~Brunner, S.~Pironio, A.~Ac\'in, N.~Gisin, A.~A. M\'ethot, and V.~Scarani.
\newblock Testing the dimension of hilbert spaces.
\newblock {\em Phys. Rev. Lett.}, 100:210503, 2008.

\bibitem{d_wit2}
K.F. P\'al and T.~V\'ertesi.
\newblock Efficiency of higher dimensional hilbert spaces for the violation of
  bell inequalities.
\newblock {\em Phys. Rev. A}, 77:042105, 2008.

\bibitem{d_wit3}
S.~Wehner, M.~Christandl, and A.~C. Doherty.
\newblock Lower bound on the dimension of a quantum system given measured data.
\newblock {\em Phys. Rev. A}, 78:062112, 2008.

\bibitem{d_wit4}
J.~Bri\"{e}t, H.~Buhrman, and B.~Toner.
\newblock A generalized grothendieck inequality and nonlocal correlations that
  require high entanglement.
\newblock {\em Comm. Math. Phys.}, 305:827, 2011.

\bibitem{distill}
C.H. Bennett, G.~Brassard, S.~Popescu, B.~Schumacher, J.~A. Smolin, and W.~K.
  Wooters.
\newblock Purification of noisy entanglement and faithful teleportation via
  noisy channels.
\newblock {\em Phys. Rev. Lett.}, 76:722--725, 1996.

\bibitem{chsh}
John~F. Clauser, Michael~A. Horne, Abner Shimony, and Richard~A. Holt.
\newblock Proposed experiment to test local hidden-variable theories.
\newblock {\em Phys. Rev. Lett.}, 23:880--884, Oct 1969.

\bibitem{tsirel_pr}
L.~A. Khalfin and B.~S. Tsirelson.
\newblock Quantum and quasi-classical analogues of bell inequalities.
\newblock {\em Symposium on the Foundations of Modern Physics}, pages 441--460,
  1985.

\bibitem{pr}
Sandu Popescu and Daniel Rohrlich.
\newblock Quantum nonlocality as an axiom.
\newblock {\em Foundations of Physics}, 24:379--385, 1994.

\bibitem{NPAbound}
Miguel Navascu\'es, Stefano Pironio, and Antonio Ac\'{i}n.
\newblock Bounding the set of quantum correlations.
\newblock {\em Phys. Rev. Lett.}, 98:010401, 2007.

\bibitem{npa}
M.~Navascu\'es, S.~Pironio, and A.~Ac\'{i}n.
\newblock A convergent hierarchy of semidefinite programs characterizing the
  set of quantum correlations.
\newblock {\em New J. Phys.}, 10:073013, 2008.

\bibitem{siam}
M.~Navascu\'{e}s S.~Pironio and A.~Ac\'{i}n.
\newblock Convergent relaxations of polynomial optimization problems with
  noncommuting variables.
\newblock {\em SIAM J. Optim.}, 20(5):2157--2180, 2010.

\bibitem{finite_dim_short}
Miguel Navascu\'es and Tam\'as V\'ertesi.
\newblock Bounding the set of finite dimensional quantum correlations.
\newblock {\em Phys. Rev. Lett.}, 115:020501, Jul 2015.

\bibitem{VB10}
T.~V\'ertesi and E.~Bene.
\newblock Two-qubit bell inequality for which positive operator-valued
  measurements are relevant.
\newblock {\em Phys. Rev. A}, 82:062115, Dec 2010.

\bibitem{barra}
J.~F. Barra, E.~S. G\'omez, G.~Ca\~nas, W.~A.~T. Nogueira, L.~Neves, and
  G.~Lima.
\newblock Higher quantum bound for the v\'ertesi-bene-bell inequality and the
  role of positive operator-valued measures regarding its threshold detection
  efficiency.
\newblock {\em Phys. Rev. A}, 86:042114, Oct 2012.

\bibitem{temp_corr}
C.~Budroni, T.~Moroder, M.~Kleinmann, and O.~G\"{u}hne.
\newblock Bounding temporal correlations.
\newblock {\em Phys. Rev. Lett.}, 111:020403, 2013.

\bibitem{MPI}
C.~Procesi.
\newblock {\em Rings with polynomial identities}.
\newblock Marcel Dekker, New York, 1973.

\bibitem{sdp}
L.~Vandenberghe and S.~Boyd.
\newblock Semidefinite programming.
\newblock {\em SIAM Review}, 38:49, 1996.

\bibitem{mps_amazing}
M.~Navascu\'es and T.~V\'ertesi.
\newblock In preparation.

\bibitem{reed_simon}
M.~Reed and B.~Simon.
\newblock {\em Functional Analysis}.
\newblock New York: Academic, 1980.

\bibitem{von_neumann}
J.~von Neumann.
\newblock On rings of operators. reduction theory.
\newblock {\em The Annals of Mathematics 2nd Ser.}, 50 (2):401–485, 1949.

\bibitem{takesaki}
M.~Takesaki.
\newblock {\em Theory of Operator Algebras I}.
\newblock Springer, 1979.

\bibitem{lasserre}
J.~B. Lasserre.
\newblock Global optimization with polynomials and the problem of moments.
\newblock {\em SIAM J. Optim.}, 11:796–817, 2001.

\bibitem{parrilo}
P.~Parrilo.
\newblock {\em Structured Semidefinite Programs and Semialgebraic Geometry
  Methods in Robstness and Optimization, Ph.D. thesis}.
\newblock California Institute of Technology, Pasadena, CA, 2000.

\bibitem{mosek}
L.~Vandenberghe and S.~Boyd.
\newblock {\em The MOSEK optimization toolbox for MATLAB manual. Version 7.0
  (Revision 140).}
\newblock MOSEK ApS, Denmark.

\bibitem{sedumi}
Jos~F Sturm.
\newblock Using sedumi 1.02, a matlab toolbox for optimization over symmetric
  cones.
\newblock {\em Optimization methods and software}, 11(1-4):625--653, 1999.

\bibitem{yalmip}
J.~L\"{o}fberg.
\newblock Yalmip : A toolbox for modeling and optimization in matlab.
\newblock In {\em Proceedings of the CACSD Conference}, Taipei, Taiwan, 2004.

\bibitem{i3322}
Daniel Collins and Nicolas Gisin.
\newblock A relevant two qubit bell inequality inequivalent to the chsh
  inequality.
\newblock {\em Journal of Physics A: Mathematical and General}, 37(5):1775,
  2004.

\bibitem{head_leg}
Miguel Navascu\'es, Gonzalo de~la Torre, and Tam\'as V\'ertesi.
\newblock Characterization of quantum correlations with local dimension
  constraints and its device-independent applications.
\newblock {\em Phys. Rev. X}, 4:011011, Jan 2014.

\bibitem{moroder}
Tobias Moroder, Jean-Daniel Bancal, Yeong-Cherng Liang, Martin Hofmann, and
  Otfried G{\"u}hne.
\newblock Device-independent entanglement quantification and related
  applications.
\newblock {\em Phys. Rev. Lett.}, 111(3):030501, 2013.

\bibitem{negativity}
G.~Vidal and R.~F. Werner.
\newblock Computable measure of entanglement.
\newblock {\em Phys. Rev. A}, 65:032314, Feb 2002.

\bibitem{vidick}
Thomas Vidick and Stephanie Wehner.
\newblock More nonlocality with less entanglement.
\newblock {\em Phys. Rev. A}, 83:052310, May 2011.

\bibitem{zoo}
Ben Lang, Tam{\'a}s V{\'e}rtesi, and Miguel Navascu{\'e}s.
\newblock Closed sets of correlations: answers from the zoo.
\newblock {\em Journal of Physics A: Mathematical and Theoretical},
  47(42):424029, 2014.

\bibitem{palvert}
K\'aroly~F. P\'al and Tam\'as V\'ertesi.
\newblock Quantum bounds on bell inequalities.
\newblock {\em Phys. Rev. A}, 79:022120, 2009.

\bibitem{seesaw1}
R.~F. Werner and M.~M. Wolf.
\newblock All-multipartite bell-correlation inequalities for two dichotomic
  observables per site.
\newblock {\em Phys. Rev. A}, 64:032112, 2001.

\bibitem{seesaw2}
K.F. P\'al and T.~V\'ertesi.
\newblock Maximal violation of the i3322 inequality using infinite dimensional
  quantum systems.
\newblock {\em Phys. Rev. A}, 82:022116, 2010.

\bibitem{faacets}
Denis Rosset, Jean-Daniel Bancal, and Nicolas Gisin.
\newblock Classifying 50 years of bell inequalities.
\newblock {\em Journal of Physics A: Mathematical and Theoretical},
  47(42):424022, 2014.

\bibitem{VP09}
Tam\'as V\'ertesi and K\'aroly~F. P\'al.
\newblock Bounding the dimension of bipartite quantum systems.
\newblock {\em Phys. Rev. A}, 79:042106, Apr 2009.

\bibitem{nielsen_chuang}
M.~A. Nielsen and I.~L. Chuang.
\newblock {\em Quantum computation and quantum information}.
\newblock Cambridge University Press, UK, 2000.

\bibitem{pironio}
Stefano Pironio.
\newblock All clauser--home--shimony--holt polytopes.
\newblock {\em Journal of Physics A: Mathematical and Theoretical},
  47(42):424020, 2014.

\bibitem{latorre}
A.~Ac\'{i}n, T.~Durt, N.~Gisin, and J.~I. Latorre.
\newblock Quantum nonlocality in two three-level systems.
\newblock {\em Phys. Rev. A}, 65:052325, May 2002.

\bibitem{tura_science}
Jordi Tura, R~Augusiak, AB~Sainz, T~V{\'e}rtesi, M~Lewenstein, and A~Ac{\'\i}n.
\newblock Detecting nonlocality in many-body quantum states.
\newblock {\em Science}, 344(6189):1256--1258, 2014.

\bibitem{marcus1}
Marcus Huber, Florian Mintert, Andreas Gabriel, and Beatrix~C. Hiesmayr.
\newblock Detection of high-dimensional genuine multipartite entanglement of
  mixed states.
\newblock {\em Phys. Rev. Lett.}, 104:210501, May 2010.

\bibitem{marcus2}
C.~{Spengler}, M.~{Huber}, A.~{Gabriel}, and B.~C. {Hiesmayr}.
\newblock {Examining the dimensionality of genuine multipartite entanglement}.
\newblock June 2011.

\bibitem{persistency}
Nicolas Brunner and Tam\'as V\'ertesi.
\newblock Persistency of entanglement and nonlocality in multipartite quantum
  systems.
\newblock {\em Phys. Rev. A}, 86:042113, Oct 2012.

\bibitem{horo_rmp}
Ryszard Horodecki, Pawe\l{} Horodecki, Micha\l{} Horodecki, and Karol
  Horodecki.
\newblock Quantum entanglement.
\newblock {\em Rev. Mod. Phys.}, 81:865--942, Jun 2009.

\bibitem{guehne_toth}
Otfried G{\"u}hne and G{\'e}za T{\'o}th.
\newblock Entanglement detection.
\newblock {\em Physics Reports}, 474(1):1--75, 2009.

\bibitem{BGLP}
Jean-Daniel Bancal, Nicolas Gisin, Yeong-Cherng Liang, and Stefano Pironio.
\newblock Device-independent witnesses of genuine multipartite entanglement.
\newblock {\em Phys. Rev. Lett.}, 106:250404, Jun 2011.

\bibitem{tsirelson}
B.S. Cirel'son.
\newblock Quantum generalizations of bell's inequality.
\newblock {\em Letters in Mathematical Physics}, 4(2):93--100, 1980.

\bibitem{extreme_2222}
Ll. Masanes.
\newblock Extremal quantum correlations for n parties with two dichotomic
  observables per site.
\newblock 2005.

\bibitem{tura_jpa}
J~Tura, AB~Sainz, T~V{\'e}rtesi, A~Ac{\'\i}n, M~Lewenstein, and R~Augusiak.
\newblock Translationally invariant multipartite bell inequalities involving
  only two-body correlators.
\newblock {\em Journal of Physics A: Mathematical and Theoretical},
  47(42):424024, 2014.

\bibitem{QRAC}
Andris Ambainis, Ashwin Nayak, Amnon Ta-Shma, and Umesh Vazirani.
\newblock Dense quantum coding and quantum finite automata.
\newblock {\em Journal of the ACM (JACM)}, 49(4):496--511, 2002.

\bibitem{gallego}
Rodrigo Gallego, Nicolas Brunner, Christopher Hadley, and Antonio Ac{\'\i}n.
\newblock Device-independent tests of classical and quantum dimensions.
\newblock {\em Phys. Rev. Lett.}, 105(23):230501, 2010.

\bibitem{tavakoli}
Armin Tavakoli, Alley Hameedi, Breno Marques, and Mohamed Bourennane.
\newblock Quantum random access codes using single $d$-level systems.
\newblock {\em Phys. Rev. Lett.}, 114:170502, Apr 2015.

\bibitem{marcin_QRAC}
Hong-Wei Li, Marcin Paw{\l}owski, Zhen-Qiang Yin, Guang-Can Guo, and Zheng-Fu
  Han.
\newblock Semi-device-independent randomness certification using
  n\ensuremath{\rightarrow}1 quantum random access codes.
\newblock {\em Phys. Rev. A}, 85(5):052308, 2012.

\bibitem{MLP}
Piotr Mironowicz, Hong-Wei Li, and Marcin Paw\l{}owski.
\newblock Properties of dimension witnesses and their semidefinite programming
  relaxations.
\newblock {\em Phys. Rev. A}, 90:022322, Aug 2014.

\bibitem{avis09}
D.~{Avis}, S.~{Moriyama}, and M.~{Owari}.
\newblock {From Bell Inequalities to Tsirelson's Theorem}.
\newblock {\em IEICE Trans. on Fundamentals}, 92:1254--1267, 2009.

\bibitem{gallego_exp}
Martin Hendrych, Rodrigo Gallego, Michal Mi{\v{c}}uda, Nicolas Brunner, Antonio
  Ac{\'\i}n, and Juan~P Torres.
\newblock Experimental estimation of the dimension of classical and quantum
  systems.
\newblock {\em Nature Physics}, 8(8):588--591, 2012.

\bibitem{Ambrosio}
Vincenzo D'Ambrosio, Fabrizio Bisesto, Fabio Sciarrino, Johanna~F. Barra,
  Gustavo Lima, and Ad\'an Cabello.
\newblock Device-independent certification of high-dimensional quantum systems.
\newblock {\em Phys. Rev. Lett.}, 112:140503, Apr 2014.

\bibitem{open}
Nicolas Gisin.
\newblock Bell inequalities: many questions, a few answers.
\newblock In {\em Quantum Reality, Relativistic Causality, and Closing the
  Epistemic Circle}, pages 125--138. Springer, 2009.

\bibitem{acin2015optimal}
Antonio Ac{\'\i}n, Stefano Pironio, Tam{\'a}s V{\'e}rtesi, and Peter Wittek.
\newblock Optimal randomness certification from one entangled bit.
\newblock 2015.

\bibitem{kleinmann2015correlations}
Matthias Kleinmann and Adan Cabello.
\newblock Correlations in nature are not merely quantum dichotomic.
\newblock 2015.

\bibitem{grandjean}
Basile Grandjean, Yeong-Cherng Liang, Jean-Daniel Bancal, Nicolas Brunner, and
  Nicolas Gisin.
\newblock Bell inequalities for three systems and arbitrarily many measurement
  outcomes.
\newblock {\em Phys. Rev. A}, 85:052113, May 2012.

\bibitem{ours}
Nicolas Brunner, Miguel Navascu\'es, and Tam\'as V\'ertesi.
\newblock Dimension witnesses and quantum state discrimination.
\newblock {\em Phys. Rev. Lett.}, 110:150501, Apr 2013.

\bibitem{mod_GS}
\r{A}. Bj\"{o}rck and C.~C. Paige.
\newblock Loss and recapture of orthogonality in the modified gram–schmidt
  algorithm.
\newblock {\em SIAM. J. Matrix Anal. \& Appl.}, 13(1):176–190, 1992.

\end{thebibliography}

\end{document}